\theoremstyle{break}
\newtheorem{theorem}{Theorem}
\newcommand{\node}{N}
\newcommand{\link}{L}
\newcommand{\linkset}{\mathcal{L}}
\newcommand{\junction}{J}
\newcommand{\phases}{\mathcal{P}}
\newcommand{\nodeset}{\mathcal{N}}
\newcommand{\junctionset}{\mathcal{J}}
\newcommand{\NN}{\mathbb{N}}
\newcommand{\inputnodes}{\mathcal{I}}
\newcommand{\outputnodes}{\mathcal{O}}
\newcommand{\pbf}{\mathbf{p}}
\title{Capacity-aware back-pressure\\ traffic signal control}
\author{Jean Gregoire \hspace{3mm} Xiangjun Qian \hspace{3mm} Emilio Frazzoli \hspace{3mm} Arnaud de La Fortelle \hspace{3mm} Tichakorn Wongpiromsarn}
\begin{document}
\maketitle

\begin{abstract}
The control of a network of signalized intersections is considered. Previous work demonstrates that the so-called back-pressure control provides stability guarantees, assuming infinite queues capacities. In this paper, we highlight the failing of current back-pressure control under finite capacities by identifying sources of non work-conservation and congestion propagation. We propose the use of a normalized pressure which guarantees work conservation and mitigates congestion propagation, while ensuring fairness at low traffic densities, and recovering original back-pressure as capacities grow to infinity. This capacity-aware back-pressure control allows to improve performance as congestion increases, as indicated by simulation results, and keeps the key benefits of back-pressure: ability to be distributed over intersections and $\mathcal{O}(1)$ complexity.
\end{abstract}

\section{Introduction}

Congestion is one of the major problems in today's metropolitan transportation networks. Before investigating investments in order to enhance the capacity of the network, or policies to reduce the traffic load, one must wonder whether the network is used at its maximum capacity. Vehicle automation is expected to enable much more precise and intelligent coordination between vehicles, possibly reducing congestion~\cite{Dresner2008}. However, automated cars are not currently ready for large commercial deployment. Human-driven cars can only be coordinated by traffic signals: more complex scheduling at intersections would require automation to be safe. That is why it is of high interest to study the theoretical maximum throughput of a network of intersections coordinated by traffic lights.

Traffic lights at intersections alternate the right-of-way of users (e.g., cars, public transport, pedestrians) to coordinate conflicting flows. A particular set of feasible simultaneous rights of way, called a phase, is decided for a certain period of time~\cite{Papageorgiou2003}. Controlling a traffic light consists of designing rules to decide which phase to apply over time. 

Pre-timed policies activate phases according to a time-periodic pre-defined schedule. There is much previous work on designing optimal pre-timed policies. However, such policies are not efficient under changing arrival rates which require adaptive control. Most used adaptive traffic signal control systems include SCOOT~\cite{Hunt1982}, SCATS~\cite{Lowrie1990}, PRODYN~\cite{Henry1984}, RHODES~\cite{Mirchandani2001}, OPAC~\cite{Gartner1983} or TUC~\cite{Diakaki2002}. These systems update some control variables of a configurable pre-timed policy on middle term, based on traffic measures, and apply it on short term. Control variables may include phases, splits, cycle times and offsets~\cite{Papageorgiou2003}. Such algorithms may differ in the way optimization is carried out (e.g., linear/dynamic programming, exhaustive enumeration) and in the modeling approach (e.g., queuing network model~\cite{Osorio2009}, cell transmission model~\cite{Lo2001}, store-and-forward~\cite{Aboudolas2009}, petri nets~\cite{DiFebbraro2002}).  Many major cities currently employ these systems which proved to be able to yield various benefits, including travel time and fuel consumption reduction, as well as safety improvements~\cite{Shepherd1992}.

More recently, based on the seminal paper~\cite{Tassiulas1992}, feedback controls have been proposed both in the case of deterministic arrivals~\cite{Varaiya2013}, or stochastic arrivals~\cite{Varaiya2009, Wongpiromsarn2012}. Time is slotted and at every time slot, a feedback controller decides the phase to apply based on current queue length estimation. This requires real-time queues estimation, but it enables to be much more reactive than other traffic controllers and to have stability guarantees. Reference~\cite{Tassiulas1992}  introduced the so-called back-pressure control which computes the control to apply based on queue lengths, and can achieve provably maximum stability. This algorithm was originally applied to wireless communication networks~\cite{Neely2003,Neely2005}, and some effort has been required to apply the approach in the context of a network of intersections~\cite{Varaiya2009, Wongpiromsarn2012}. A key feature of this algorithm is that it can be completely distributed over intersections, in the sense that it can be implemented by running an algorithm of complexity $\mathcal{O}(1)$, requiring only local information, at each intersection. 

However, the strong assumption of current back-pressure traffic control algorithms is the unboundedness of queue capacities. Indeed, when the queue at the entry of an intersection grows so much that it reaches the upstream intersection, congestion will propagate: this is a non-negligible and easy to observe phenomenon. The phenomenon is commonly referred as blocking in queuing theory, and many blocking types can be considered~\cite{Perros1994}. In worst-case scenario, blocking results in deadlocks whose resolution can be of high complexity~\cite{Kundu1989, gregoire:hal-00828976}. An off-line optimization of a pre-timed policy is proposed in~\cite{Osorio2009,Osorio2009a}. The standard queuing network model with fixed service times of servers~\cite{Bocharov2003} is modified to account for blocking causing inter-queue interactions. The notion of effective service rate aims at accounting for both service and blocking. An expression of the blocking probability of each queue, i.e., the probability of the queue to be full, can be derived. The idea is to include in the optimized objective function penalties for high blocking probability. The method proved to be efficient to improve performance as congestion increases. However, it is for off-line optimization of fixed-cycle signals for a certain scenario (given arrival rates). In this paper, we aim at building a feedback control that can adapt on-line to varying situations. Some works applied to wireless communication networks have proposed feedback controls that can achieve maximum throughput under queue boundedness constraints~\cite{Giaccone2007}. However, they suppose the absence of arrivals at internal nodes, cannot be easily implemented, and are thus not suitable for our application. 

This paper proposes to keep the fundamental idea of back-pressure control, that is pressure computation at every node of the network, in order to keep the resulting key benefits: ability to be distributed over intersections and $\mathcal{O}(1)$ complexity. However, we propose to take into account the queue capacities for the computation of pressures. The idea is to normalize pressures, so that full queues all exert the same normalized maximal pressure independently from their capacity. Following the idea of~\cite{Giaccone2007}, this normalization is expected to decrease the blocking probability.

The paper is organized as follows. Section~\ref{sec:model} describes the phase-based queuing network model. Section~\ref{sec:failing-bp} presents the current back-pressure traffic signal control of~\cite{Wongpiromsarn2012} and proves its lack of work conservation and its inability to avoid congestion propagation under finite queue capacities. Section~\ref{sec:capacity-aware-traffic-control} proposes the use of normalized pressures and proves the benefits in terms of work-conservation and congestion mitigation. Simulations of Section~\ref{sec:simulations} show the efficiency of the approach proposed in this paper and Section~\ref{sec:conclusions} concludes and opens perspectives.

\section{Model}
\label{sec:model}

\subsection{Queuing network topology}

The network of intersections is modelled as a directed graph of nodes $(\node_a)_{a\in\nodeset}$ and links $(\link_j)_{j\in\linkset}$. Nodes represent roads with queuing vehicles, and links enable transfers from node to node. This is a standard queuing network model.

\begin{figure}[ht]
\centering
\includegraphics[width=0.8\linewidth]{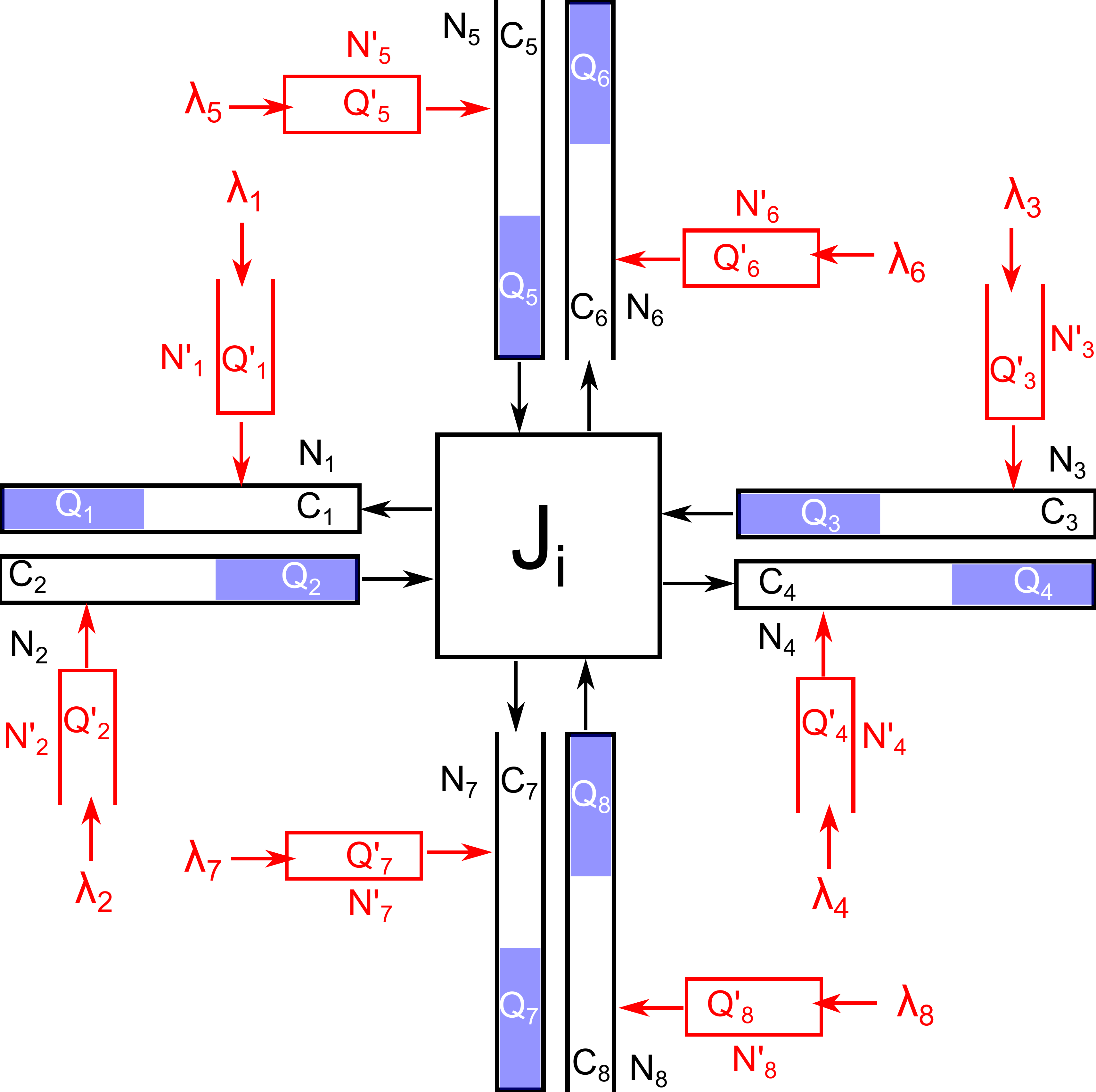}\hfill
\caption{A junction with 4 incoming nodes and 4 outgoing nodes which corresponds to the intersection depicted in Figure~\ref{fig:phases}.}
\label{fig:network-topology}
\end{figure}

Every signalized intersection is modelled as a server managing a junction which consists of set of links. Junctions $(\junction_i)_{i\in\junctionset}$ are supposed to form a partition of links. For every junction $\junction$, $\inputnodes(\junction)$ and $\outputnodes(\junction)$ denote respectively the inputs and the outputs of $\junction$. Inputs (resp.outputs) of junction $\junction$ are nodes $N$ such that there exists a link $L\in\junction$ pointing from (resp. to) $N$. The reader should consider the introduction of junctions in the model as an overlay of the queuing network model.

For the sake of simplicity, we do not represent links in the queuing network representation of Figure~\ref{fig:network-topology} and we assume that at every junction, there exists a link from any input to any output. If the link does not exist physically, the flow through it will be always zero.

Every server maintains an internal queue for every input/output, and server work enables to transfer vehicles from an input to an output of the junction. As standard in queuing network control, time is slotted, and each (time) slot $k\in\NN$ maps to a certain period of time. Due to routing of vehicles, there are several queues at node $\node_a$ and $Q_{ab}(k)$ denotes the number of vehicles at node $\node_a$ in slot $k$ waiting to leave node $\node_a$ for $\node_b$. $Q_a(k):=\sum_b Q_{ab}(k)$ denotes the total number of vehicles waiting at node $\node_a$.

\subsection{Arrival and routing processes}

Let $A_a(k)$ denote the number of vehicles that exogenously arrive at node $\node_a$ during slot $k$. We assume that the arrival process $A_a(k)$ is rate-convergent with rate $\lambda_a$ which represents the expected number of arrivals per time slot at node $N_a$  in the long-term. The arrival process is not controlled, it is an exogenous process. There are also endogenous vehicle transfers allowed by links at junctions. We let $f_{ca}(k)$ denote the number of vehicles leaving $N_c$ for $N_a$ during slot $k$. When a vehicle enters node $N_a$ at slot $k$ endogenously (originating from another node), or exogenously (inserted into the network through the arrival process), it increments one of the queues $Q_{ab}(k)$, unless it leaves the network at $N_a$. We assume that the ratio of vehicles added to $Q_{ab}(k)$ is rate-convergent with rate $r_{ab}\in[0,1]$. Rates $r_{ab}$ represent the long-term routing ratios of vehicles entering $N_a$. Because some vehicles leave the network at $N_a$ and are not added to some queue, the routing ratios do not necessarily sum to $1$ and $1-\sum_b r_{ab} \geq 0$ represents the exit rate at node $N_a$. The queue dynamics is as follows:
\begin{equation}
Q_{ab}(k+1)=Q_{ab}(k)-f_{ab}(k)+r_{ab}(k)\left(\sum_c f_{ca}(k) + A_a(k)\right)\text{,}
\end{equation}
where $r_{ab}(k)$ is rate-convergent with rate $r_{ab}$.

\subsection{Phase-based control}

At every time slot, the service offered by servers at junctions is controlled by activating a given signal phase $p_i$ at every junction $\junction_i$ from a predefined finite set of feasible phases $\phases_i$. When phase $p_i$ is activated during one slot, $\mu_{ab}(p_i)$ represents the maximum number of vehicles transferred from $\node_a$ to $\node_b$ during that slot. Figure~\ref{fig:phases} depicts the 4 typical phases of a 4 inputs/4 outputs junction. Each global phase $p=(p_i)_{i\in\junctionset}$ results in a different service $\mu(p)$.

\begin{figure}[ht]
\centering
\includegraphics[width=0.8\linewidth]{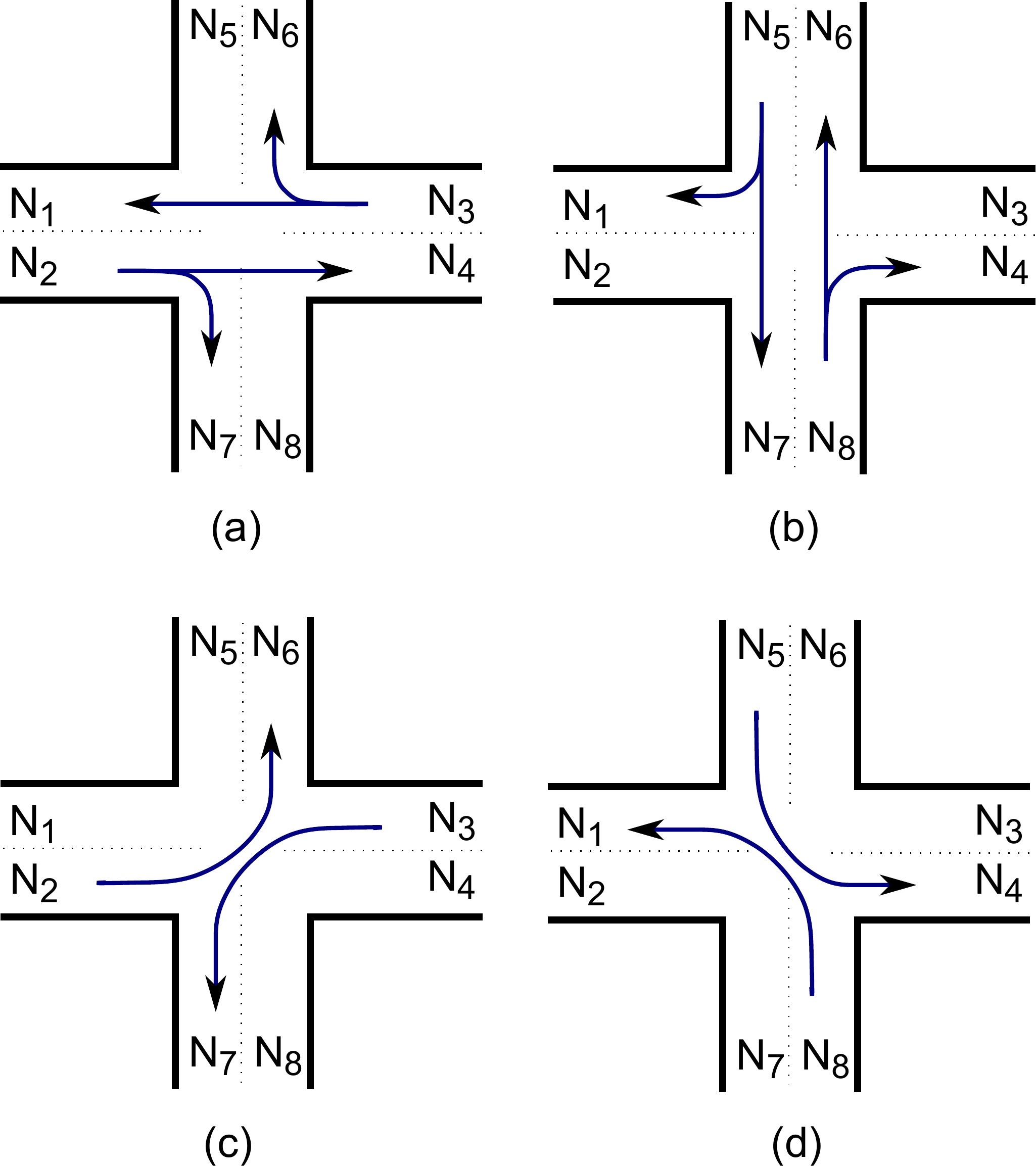}\hfill
\caption{A typical set of feasible phases at a junction. For phase (c), we have $\mu_{26}(p^{(c)})>0$ and $\mu_{37}(p^{(c)})>0$.}
\label{fig:phases}
\end{figure}

Two phenomena may affect the actual number of vehicles being transferred. First, only the vehicles which are currently at a queue at the beginning of the time slot can leave the queue during that slot. Second, and this is the phenomenon highlighted in this paper, above a certain queue length, a node is full and cannot accept vehicles any more.

Let $\pbf(k)$ denote the phase control through time (the only controlled variable), the number of vehicles transferred from $N_a$ to $N_b$ during slot $k$ is:
\begin{eqnarray}
f_{ab}(k)&=&\delta(Q_b(k),C_b) \min(Q_{ab}(k),\mu_{ab}(\pbf(k)) \label{eq:flow-with-blocking}
\end{eqnarray}

The function $\delta(q,c)$ models blocking due to downstream congestion, and  returns $1$ if $q < c$, and $0$ else. $C_b$ is referred as the capacity of node $N_b$: it is the maximum queue length from which the node cannot accept vehicles any more. We say that node $N_b$ is full at time slot $k$ if $Q_b(k) \geq C_b$. The blocking phenomenon is illustrated in Figure~\ref{fig:blocking}. Note that this simple binary model is conservative because in reality, even if a node is full at the beginning of the time slot, some vehicles may be able to enter this node if the downstream junction gives the right-of-way to vehicles in that node. This effect will not be considered in this paper. 

\begin{figure}[ht]
\centering
\includegraphics[width=1\linewidth]{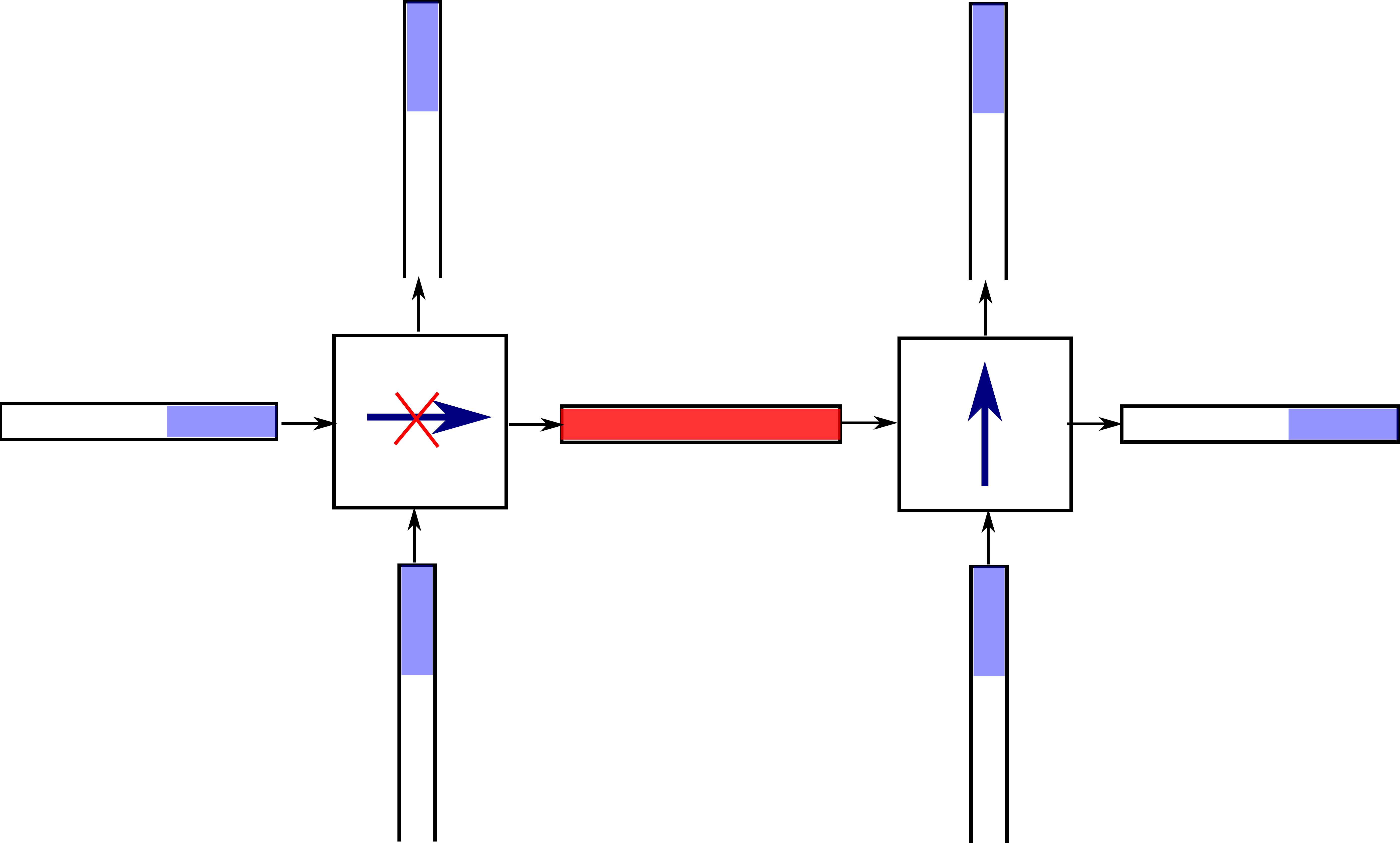}\hfill
\caption{Since the red-colored middle node is full, it cannot accept vehicles any more, even if the traffic light gives the right-of-way to vehicles going to this node.}
\label{fig:blocking}
\end{figure}

\section{Failing of back-pressure control under bounded queues constraints}
\label{sec:failing-bp}

\subsection{Back-pressure control}

Back-pressure control consists of a feedback control law that decides the phase to apply at every time slot based on current queues lengths. Let $\phi$ denote the control law, the phase control $\pbf$ satisfies at every time slot $k$:
\begin{equation}
\pbf(k)=\phi(Q(k))
\end{equation}
where $Q(k)$ denotes the network state at slot $k$, that is the queues lengths matrix $[Q_{ab}(k)]_{a,b\in\nodeset}$. The control law maps the current network state to the phase to apply. The back-pressure control law is defined component-wise for each junction as described by Algorithm~\ref{alg:back-pressure}. It only requires aggregated queues lengths $Q_a(k)$ and binary variables $d_{ab}(k)\in\{0,1\}$ indicating the presence of vehicles waiting at $N_a$ to leave $N_a$ for $N_b$. The latter can be measured using loop detectors on the dedicated lanes at the entry of the junction.
\begin{algorithm}[ht]
\begin{algorithmic}[5]
\For{$N_a\in \inputnodes(\junction_i) \cup \outputnodes(\junction_i)$}
\State $\Pi_{a}(k) \gets P_{a}\left(Q_{a}(k)\right)$
\EndFor
\For{$N_a\in \inputnodes(\junction_i), N_b\in\outputnodes(\junction_i)$}
\State $W_{ab}(k) \gets d_{ab}(k) \max\left(\Pi_{a}(k)-\Pi_{b}(k),0\right)$
\EndFor
\State $\pbf_i(k) \gets \arg \max\limits_{p_i\in\phases_i} \sum\limits_{a, b} W_{ab}(k)  \mu_{ab}(p_i)$ \label{line-arg-max-bp}
\State \Return{Phase $\pbf_i(k)$ to apply in time slot $k$ at junction $\junction_i$}
\end{algorithmic}
\caption{Back-pressure control law at junction $\junction_i$}
\label{alg:back-pressure}
\end{algorithm}

The idea of back-pressure control is to compute pressures at every node of the network based on queue lengths and to allow flows with a high upstream pressure and a low downstream pressure, like opening a tap. Current back-pressure traffic signal control~\cite{Wongpiromsarn2012, Varaiya2009} uses Algorithm~\ref{alg:back-pressure} with  linear pressure functions $P_a(Q_a)=Q_a$: the pressure exerted by a node equals its queue length. Algorithm~\ref{alg:back-pressure} proceeds as follows:
\begin{itemize}
\item First, pressures at input/output nodes are computed. If linear pressures are used, the pressure exerted by node $N_a$ at slot $k$ is $P_a(Q_a(k))=Q_a(k)$
\item Then, the pressure difference associated to each flow from an input $N_a$ to an output $N_b$ of the junction is computed, it equals $\max\left(\Pi_{a}(k)-\Pi_{b}(k),0\right)$, and it is multiplied by $d_{ab}(k)$ to account for the presence/absence of vehicles waiting at $N_a$ willing to leave $N_a$ for $N_b$.
\item For each phase $p_i\in\phases_i$, the total pressure release allowed by $p_i$ is computed: it equals the sum of pressure differences through each link of the junction weighted by the flow of vehicles that can be transferred through the corresponding link when phase $p_i$ is activated, that is $\sum\limits_{a, b} W_{ab}(k)  \mu_{ab}(p_i)$
\item Finally, the returned phase $\pbf_i(k)$ is the phase $p_i$ maximizing the weighted sum $\sum\limits_{a, b} W_{ab}(k)  \mu_{ab}(p_i)$.
\end{itemize}
This control law is proved to be stability-optimal under infinite capacities, i.e., the queuing network is stabilized for all arrival rates that can be stably handled considering all control policies. The two key properties of back-pressure control are its ability to be distributed over junctions and its $\mathcal{O}(1)$ complexity. 

Under bounded queues constraints, with linear pressure functions, pressure at $\node_a$ saturates at $P_a=C_a$ for $Q_a=C_a$. In the following, we show that this saturation at different levels for every node may result in a loss of work conservation and congestion propagation as presented in the sequel.

\subsection{Loss of work conservation}
First of all, the notion of work and work-conservation is defined in our context. We say that the server at junction $\junction_i$ works during slot $k$ if there are transfers through links of the junction during the slot. A control is work-conserving if the existence of an input $N_a$ and an output $N_b$ such that $Q_{ab}(k) > 0$ and $Q_b(k) < C_b$ is sufficient to ensure that the server of the junction works during slot $k$.   A loss of work-conservation is clearly a sign of inefficiency. Due to limited queues capacities, back-pressure control under linear pressure functions is not work-conserving as stated in Theorem~\ref{thm:non-work-conserving}, with a concrete example depicted in Figure~\ref{fig:non-work-conservation}.

\begin{theorem}[Loss of work-conservation under back-pressure]
\label{thm:non-work-conserving}
Under bounded queues constraints, back-pressure control is not work-conserving in the general case.
\end{theorem}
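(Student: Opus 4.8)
The plan is to prove the statement by exhibiting an explicit counterexample: a minimal network consisting of a single junction, together with a choice of node capacities and a one-slot network state, for which the premise of work-conservation holds but Algorithm~\ref{alg:back-pressure} with linear pressures $P_a(Q_a)=Q_a$ selects a phase under which no transfer occurs at all (this is the situation sketched in Figure~\ref{fig:non-work-conservation}). Since work-conservation is a single-slot property of the control law, the arrival and routing processes play no role and only the state at slot $k$ matters.

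First I would fix the instance. Take a junction $\junction$ with inputs $\{N_a,N_c\}$ and outputs $\{N_b,N_d\}$, and two feasible phases: $p_1$ with $\mu_{ab}(p_1)>0$ and every other transfer rate equal to $0$, and $p_2$ with $\mu_{cd}(p_2)>0$ and every other transfer rate equal to $0$ (the movements $N_a\to N_d$ and $N_c\to N_b$ being physically absent, hence always carrying zero flow). Choose the capacities with $C_b$ large compared with $C_d$, and pick a state $Q(k)$ with $Q_{ab}(k)>0$, $Q_{cd}(k)>0$, $Q_d(k)=C_d$, $Q_c(k)>C_d$, and $Q_a(k)<Q_b(k)<C_b$; the last chain of inequalities can be met precisely because $C_b$ is large, and it encodes that the downstream node $N_b$ of the serviceable movement $N_a\to N_b$ is \emph{not} full yet holds a longer queue than its upstream node $N_a$.

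Then I would run the algorithm on $Q(k)$. Because pressures equal queue lengths, $W_{ab}(k)=d_{ab}(k)\max(Q_a(k)-Q_b(k),0)=0$ since $Q_a(k)<Q_b(k)$, whereas $W_{cd}(k)=d_{cd}(k)\max(Q_c(k)-Q_d(k),0)>0$ since $d_{cd}(k)=1$ (as $Q_{cd}(k)>0$) and $Q_c(k)>Q_d(k)=C_d$. Hence $\sum_{a,b}W_{ab}(k)\mu_{ab}(p_1)=0<W_{cd}(k)\mu_{cd}(p_2)=\sum_{a,b}W_{ab}(k)\mu_{ab}(p_2)$, so the $\arg\max$ in line~\ref{line-arg-max-bp} returns $p_2$ (uniquely). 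Evaluating the realized flows under $\pbf_i(k)=p_2$ through~\eqref{eq:flow-with-blocking} gives $f_{ab}(k)=\delta(Q_b(k),C_b)\min(Q_{ab}(k),\mu_{ab}(p_2))=0$ because $\mu_{ab}(p_2)=0$, and $f_{cd}(k)=\delta(Q_d(k),C_d)\min(Q_{cd}(k),\mu_{cd}(p_2))=0$ because $Q_d(k)=C_d$ forces $\delta(Q_d(k),C_d)=0$; all remaining flows vanish by construction of the phase set. Thus the server of $\junction$ does not work during slot $k$, even though $Q_{ab}(k)>0$ and $Q_b(k)<C_b$ — which contradicts work-conservation.

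The argument is elementary; the only care needed is in the model bookkeeping — checking that the instance is admissible (the single junction trivially partitions the link set), that $d_{cd}$ reflects only the presence of vehicles at $N_c$ and not the downstream blocking at $N_d$ (so that $W_{cd}(k)$ really is strictly positive), and that the blocking factor $\delta$ is applied to the \emph{already chosen} phase. There is no genuine analytic difficulty here: the mechanism being exposed is simply that, under unnormalized linear pressures, a large-capacity downstream queue that is not yet full can cancel the upstream pressure of a serviceable movement, while a full small-capacity queue still advertises a strictly positive pressure difference and therefore captures the phase competition in line~\ref{line-arg-max-bp}.
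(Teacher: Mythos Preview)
Your counterexample is correct and is essentially the same construction as the paper's: a single junction with two exclusive phases, one serving a movement whose downstream node is full but whose pressure difference is positive (so it wins the $\arg\max$), the other serving a movement whose downstream node is not full but whose pressure difference vanishes because the downstream queue happens to exceed the upstream queue. The only difference is a relabeling: in the paper the blocked-but-selected movement is $N_a\to N_b$ and the serviceable-but-unselected one is $N_c\to N_d$, whereas you swap these roles; your write-up is also a bit more explicit about the auxiliary hypotheses ($Q_{ab}(k)>0$, $Q_{cd}(k)>0$, the capacity asymmetry needed to realize the state) than the paper's own proof.
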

\begin{proof}
Consider the network of Figure~\ref{fig:non-work-conservation}. Suppose that the middle junction has two feasible phases: $p_{ab}$ with $\mu_{ab}(p_{ab})>0$, and $p_{cd}$ with $\mu_{cd}(p_{cd})>0$. Suppose that at time slot $k$, $Q_b(k)=C_b$, $Q_a(k)>C_b$ and $Q_c(k) < Q_d(k)$. Then, $W_{ab}(k) \mu_{ab}(p_{ab})>0$ and $W_{cd} \mu_{cd}(p_{cd}) = 0$. Hence, the phase to apply at the middle junction computed by Algorithm~\ref{alg:back-pressure} is $p_{ab}$. Since, $Q_b(k)=C_b$, Equation \eqref{eq:flow-with-blocking} implies that $f_{ab}(k)=0$. As a result, the middle junction will not work, because the selected phase is $p_{ab}$, but due to downstream congestion, transfers to $N_b$ are not feasible. However, choosing phase $p_{cd}$ would have enabled the server to work.
\end{proof}

\begin{figure}[ht]
\centering
\includegraphics[width=1\linewidth]{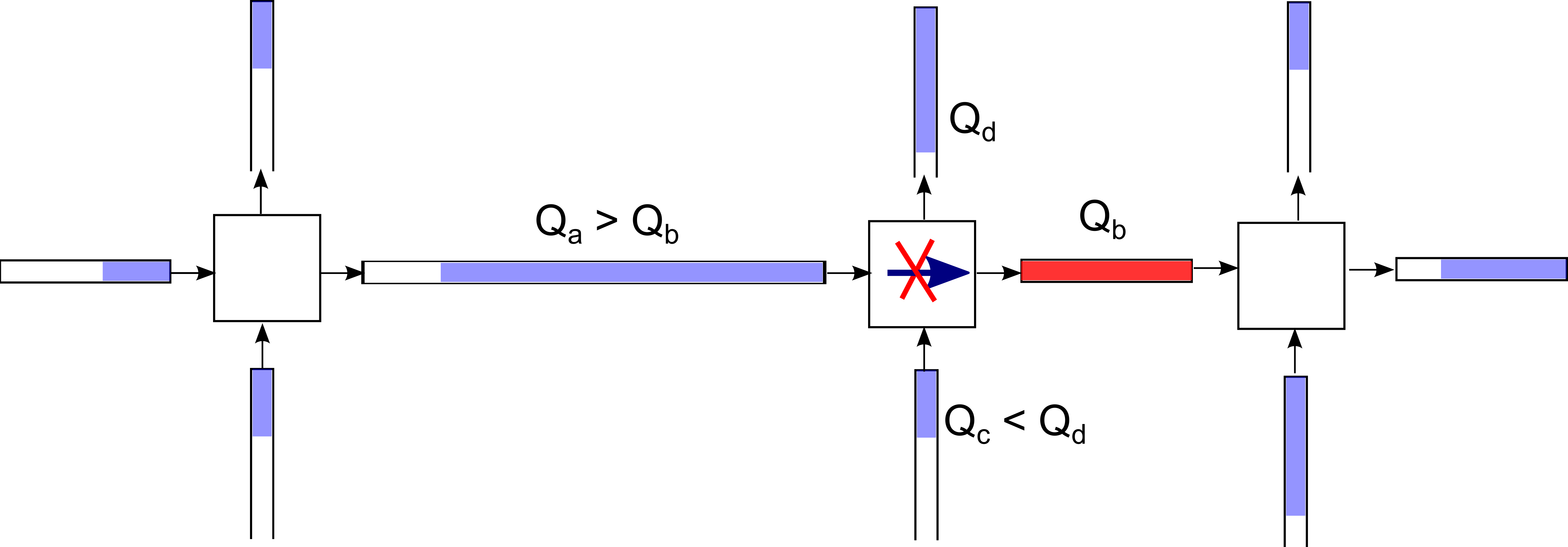}\hfill
\caption{Example of loss of work conservation of back-pressure control with linear pressure functions.}
\label{fig:non-work-conservation}
\end{figure}

The proved loss of work conservation is an important property since the subsequent inefficiency results in congestion propagation as highlighted in the following.

\subsection{Congestion propagation and deadlocks}

As depicted in Figure~\ref{fig:congestion-propagation}, loss of work conservation may result in congestion propagation, both to the node which has the right-of-way but cannot empty because of downstream congestion, and to the node which has not the right-of-way.

\begin{figure}[ht]
\centering
\includegraphics[width=1\linewidth]{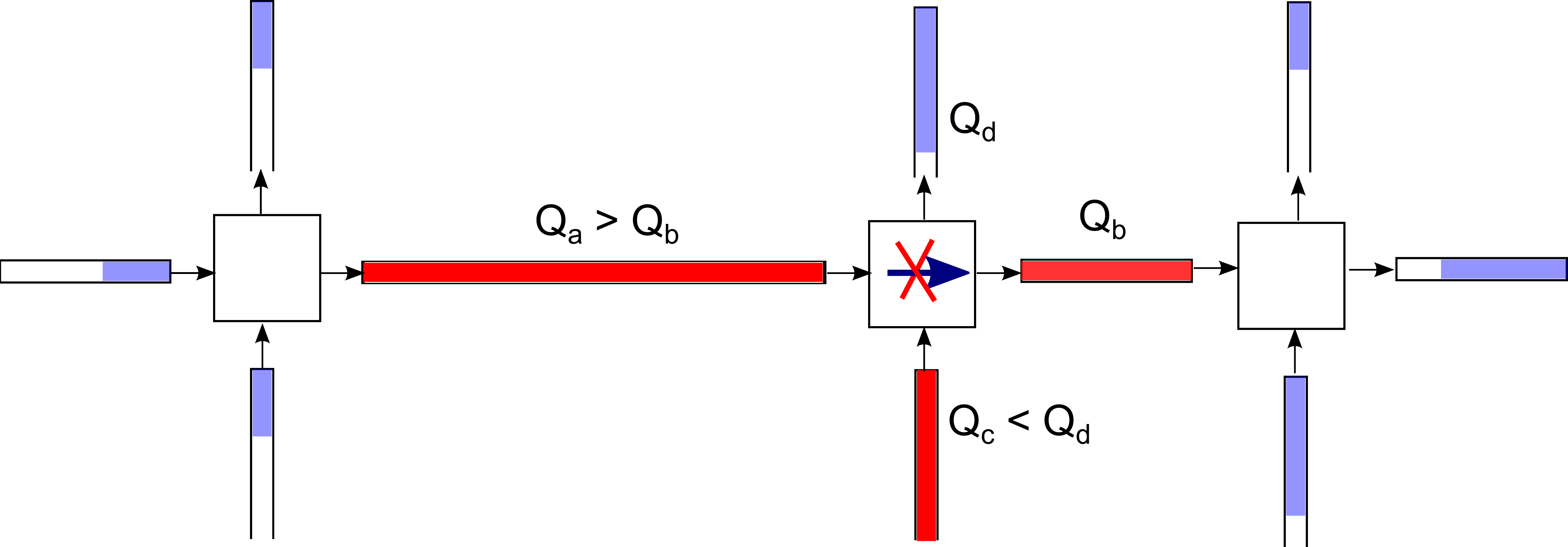}\hfill
\caption{Congestion propagation due to a loss of work conservation. Since $\node_b$ is full $f_{ab}(k)=0$ and $\node_a$ is not emptied, so congestion propagates to $\node_a$. Moreover, $\node_c$ is also not emptied because vehicles do not have the right-of-way under the selected phase, so congestion propagates to both nodes $\node_a$ and $\node_c$.}
\label{fig:congestion-propagation}
\end{figure}

In worst-case scenario, such congestion propagation can lead to deadlocks, as depicted in Figure~\ref{fig:deadlock}. 

\begin{figure}[ht]
\centering
\includegraphics[width=1\linewidth]{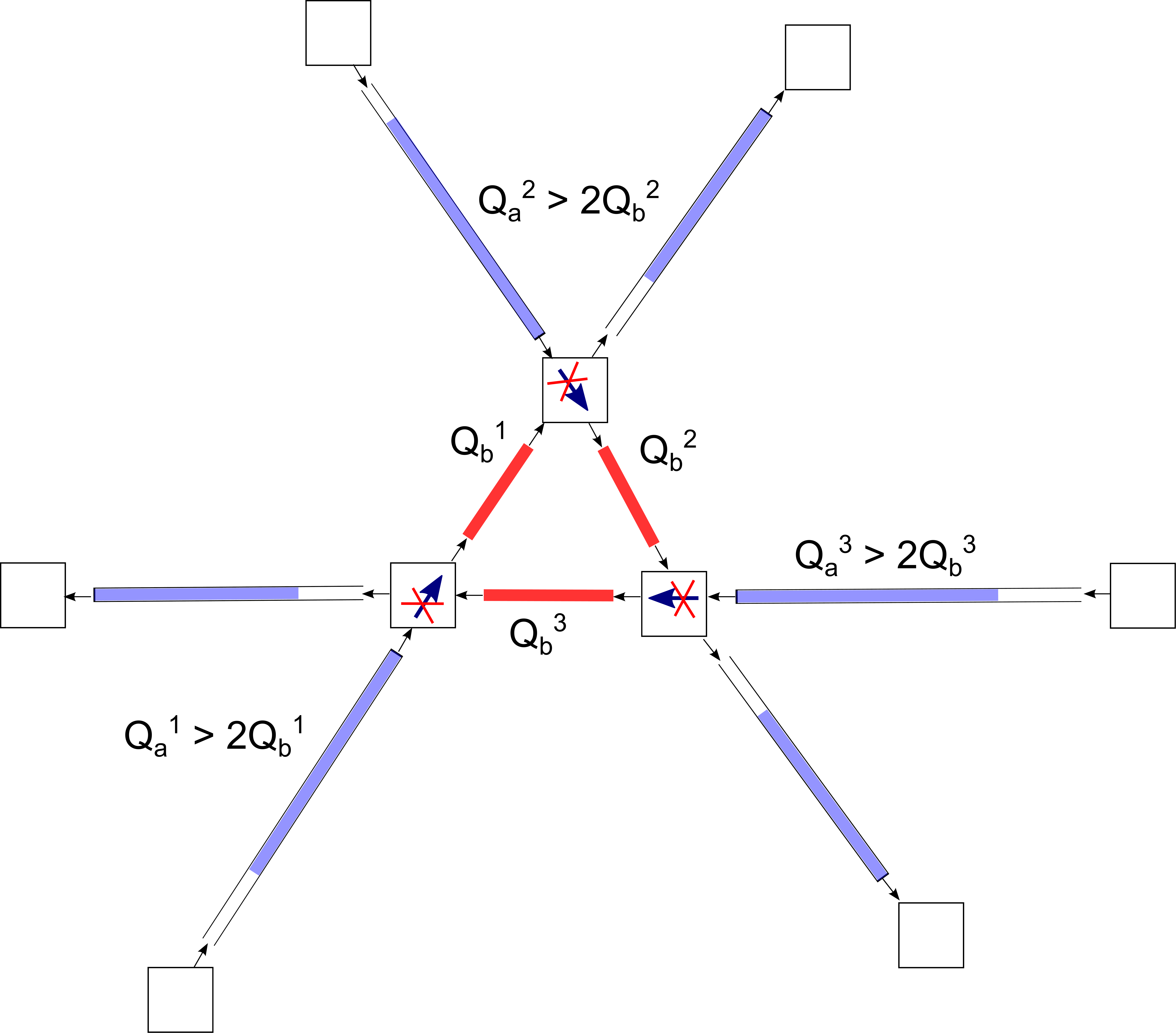}\hfill
\caption{A deadlock for back-pressure control. $Q_a^1$, $Q_a^2$ and $Q_a^3$ will be ever growing.}
\label{fig:deadlock}
\end{figure}

\section{Capacity-aware traffic control}
\label{sec:capacity-aware-traffic-control}

The following presents our approach to take into account the limited queues capacities by using normalized pressures in back-pressure control to enforce work-conservation and mitigate congestion propagation. First of all, the reasons that motivate devising convex normalized pressures are introduced in the sequel.

\subsection{Purpose of a convex normalized pressure and criteria}

\subsubsection{Purpose of a convex pressure}

When a node approaches full occupancy, every additional vehicle is more and more problematic as the queue grows. That is why it makes sense to define a pressure such that the marginal pressure, i.e., the increase in pressure due to an additional vehicle, rises as the queue grows. This remark justifies the use of a convex pressure.

\subsubsection{Purpose of a normalized pressure}

When a node $\node_b$ is full, it does not make sense to have a node $\node_a$ such that $P_a-P_b>0$, since the "tap" associated to the flow from $\node_a$ to $\node_b$ cannot be opened. That is why we propose to normalize the pressures so that any full node will exert a pressure that equals $1$, while an empty node does not exert any pressure. 

The simplest normalization that could be carried out would consist of using relative pressure functions ${P_a(Q_a)=Q_a/C_a}$. 
However, in order to have strictly convex pressures and to respect the fairness requirement proposed below, the normalization will be slightly more complex.

\subsubsection{Requirement for fairness at low traffic density}

At low traffic density, there is no reason to be unfair. Indeed, even a random choice of phases would stabilize the network. So, unfairness would not be justified by any global stabilization goal. Suppose that we use relative pressures functions ${P_a(Q_a)=Q_a/C_a}$, then an additional vehicle causes an increase in pressure of $1/C_a$, i.e., as high as capacity decreases. However, at low traffic density the marginal pressure should be uniform over nodes. We say that pressure functions $\{P_a(Q_a) : a\in\nodeset\}$ are fair at low traffic density if:
\begin{equation}
\exists K>0: \forall a\in\nodeset, \frac{dP_a}{dQ_a}(0)=K
\end{equation}

\subsubsection{Requirements for stability guarantees conservation}

Finally, it is important to ensure that as capacities grow to infinity, the original back-pressure control is recovered, to take advantage of the stability guarantees in the context of infinite capacities. That is why a requirement for the pressure function $P_a(Q_a)$ is to be linear for $Q_a/C_a \to 0$. If pressure functions are fair, $P_a(Q_a)=K Q_a + o_a(Q_a/C_a)$ in Landau notation, and the pressure function is linear at low occupancy. As a result, if the queues are always much under maximum occupancy, i.e., if the infinite queues capacities assumption is valid, the back-pressure control and its stability guarantees are recovered.

Now we have presented criteria that should respect the modified pressure in order to be capacity-aware and fair at low traffic density, we propose in the following a convex normalized pressure which respects the above criteria. 

\subsection{Example of normalized pressure}

The proposed pressure function should just be considered as an example of a pressure function fulfilling the presented requirements:

\begin{equation}
P_a(Q_a)=\min\left(1,~\frac{\frac{Q_a}{C_\infty}+\left(2-\frac{C_a}{C_\infty}\right)\left(\frac{Q_a}{C_a}\right)^m}{1+\left(\frac{Q_a}{C_a}\right)^{m-1}}\right)
\label{eq:pressure-function}
\end{equation}

At low occupancy, the pressure at node $N_a$ is linear: $P_a(Q_a) \simeq Q_a/C_\infty$, so pressure functions are fair and respect the requirement for stability guarantees conservation. The function is convex: the slope of the pressure increases as occupancy grows. Pressure over congestion threshold is normalized: $\forall a\in\nodeset, \forall Q_a\geq C_a, P_a(Q_a)=1$. The shape of pressure functions for two different capacities is depicted in Figure~\ref{fig:convex-normalized-pressure}. One can observe that the pressure function leaves the initial linear behavior at lower occupancy as capacity decreases.

The parameters $m$ and $C_\infty$ determine the shape of pressure functions; $m$ locates  the transition from the linear regime, while $C_\infty$ determines the slope of the pressure at low occupancy, and is such that a node which capacity is $C_\infty$ will have a linear pressure. We assume that all capacities are lower than $C_\infty$ and $m>1$.

\begin{figure}[ht]
\centering
\includegraphics[width=1\linewidth]{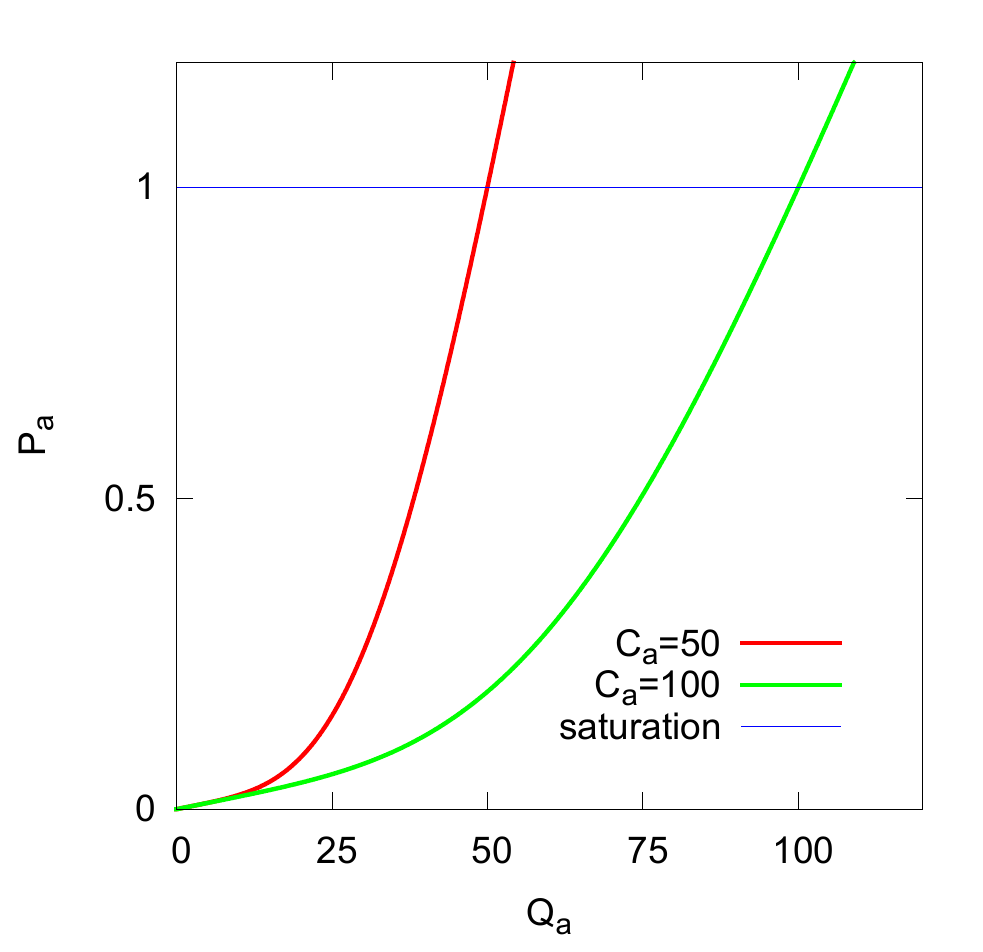}\hfill
\caption{Plot of the convex normalized pressure function with parameters $C_\infty=500$ and $m=4$ for two different congestion thresholds $C_a=50$ and $C_a=100$.}
\label{fig:convex-normalized-pressure}
\end{figure}

\subsection{Work-conservation}

As expected, pressure normalization enables to ensure work-conservation. This can be easily visualized in Figure~\ref{fig:work-conservation} where one can observe that the deadlock of Figure~\ref{fig:deadlock} is resolved. In the following, we prove Theorem~\ref{thm:work-conservation} which states work-conservation under convex normalized pressure.

\begin{theorem}[Work-conservation under normalized pressures]
\label{thm:work-conservation}
Assume that pressure functions $P_a$ are increasing functions taking values in $[0,1]$, $P_a(0)=0$ and $P_a(C_a)=1$. Assume also that in case of equality, the $\arg\max$ of Line~\ref{line-arg-max-bp} in Algorithm~\ref{alg:back-pressure} privileges phases $p_i$ such that there exists $a,b$ with $Q_{ab}(k)>0$, $Q_b(k)<C_b$ and $\mu_{ab}(p_i)>0$. Then, back-pressure control using normalized pressures is work-conserving. 
\end{theorem}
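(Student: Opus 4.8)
The plan is to show that the phase $\pbf_i(k)$ returned by Algorithm~\ref{alg:back-pressure} always satisfies the following: there exist an input $N_a$ and an output $N_b$ of $\junction_i$ with $Q_{ab}(k)>0$, $Q_b(k)<C_b$ and $\mu_{ab}(\pbf_i(k))>0$. Once this is established, \eqref{eq:flow-with-blocking} gives at once $f_{ab}(k)=\delta(Q_b(k),C_b)\min(Q_{ab}(k),\mu_{ab}(\pbf_i(k)))=\min(Q_{ab}(k),\mu_{ab}(\pbf_i(k)))>0$, so a transfer occurs through a link of $\junction_i$ and the server works. The one ingredient I would isolate first is a monotonicity remark drawn from the hypotheses on the pressures: since $P_b$ is increasing, takes values in $[0,1]$ and satisfies $P_b(C_b)=1$, any (input/output) node with $\Pi_b(k)<1$ necessarily has $Q_b(k)<C_b$, because $Q_b(k)\ge C_b$ would force $\Pi_b(k)=P_b(Q_b(k))\ge P_b(C_b)=1$. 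In particular, whenever $W_{ab}(k)>0$ we have $d_{ab}(k)=1$ (hence $Q_{ab}(k)>0$) and $\Pi_a(k)>\Pi_b(k)$, so $\Pi_b(k)<\Pi_a(k)\le 1$ and therefore $Q_b(k)<C_b$: a strictly positive pressure drop already certifies that the downstream node is not full.

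The argument then splits on the optimal value $M^\star:=\max_{p_i\in\phases_i}\sum_{a,b}W_{ab}(k)\mu_{ab}(p_i)$ computed on Line~\ref{line-arg-max-bp}. If $M^\star>0$, the returned phase is any maximizer (so the tie-breaking rule is irrelevant in this case) and $\sum_{a,b}W_{ab}(k)\mu_{ab}(\pbf_i(k))>0$; since every summand is nonnegative, some pair $(a,b)$ has $W_{ab}(k)>0$ and $\mu_{ab}(\pbf_i(k))>0$, and by the monotonicity remark this very pair witnesses $Q_{ab}(k)>0$, $Q_b(k)<C_b$ and $\mu_{ab}(\pbf_i(k))>0$, which is the desired claim.

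The case $M^\star=0$ is the one I expect to be the main obstacle, and it is precisely where the tie-breaking hypothesis is genuinely needed. Here every phase attains the maximum, so the $\arg\max$ on Line~\ref{line-arg-max-bp} is all of $\phases_i$, and the point to check is that the class of phases \emph{privileged} by the tie-breaking rule is nonempty — otherwise the rule is vacuous and tells us nothing. This is exactly where the hypothesis of the theorem enters: by assumption there is a pair $(a^\star,b^\star)$ with $Q_{a^\star b^\star}(k)>0$ and $Q_{b^\star}(k)<C_{b^\star}$; we may take it to be servable, i.e.\ $\mu_{a^\star b^\star}(p)>0$ for some $p\in\phases_i$, since otherwise no control whatsoever could move those vehicles through the junction and the configuration would be degenerate. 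Such a $p$ then meets the privileging condition (with $a=a^\star$, $b=b^\star$), so the tie-breaking rule forces $\pbf_i(k)$ itself to be privileged, i.e.\ there exist $a,b$ with $Q_{ab}(k)>0$, $Q_b(k)<C_b$ and $\mu_{ab}(\pbf_i(k))>0$, which is again the claim. Combining the two cases proves work-conservation; beyond the non-vacuousness check just described (and the attendant observation that the theorem's hypothesis is only meaningful for a servable pair), the remaining steps are routine bookkeeping from the normalization $P_b(C_b)=1$ and the range and monotonicity of the pressures.
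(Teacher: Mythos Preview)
Your proof is correct and follows essentially the same route as the paper's: both hinge on the observation that normalization forces $W_{ab}(k)>0\Rightarrow Q_b(k)<C_b$, and then use the tie-breaking rule to handle the zero-score case. The only cosmetic difference is that the paper argues by contradiction (the selected phase fails to work $\Rightarrow$ its score is zero $\Rightarrow$ the tie-breaking rule should have picked a working phase), whereas you argue directly by splitting on $M^\star$; your explicit remark about ``servability'' of the witnessing pair is a point the paper leaves implicit.
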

\begin{proof}
Suppose that back-pressure control using normalized pressure functions is not work-conserving. Then, there exists a server at a junction $\junction_i$ which does not work during some slot $k$, while there exists a phase $\tilde p_i$ such that for some $a,b$, $\mu_{ab}(\tilde p_i)>0$, $Q_{ab}(k)>0$ and $Q_b(k) < C_b$.

Let $\pbf_i(k)$ denote the phase at junction $\junction_i$ computed by Algorithm~\ref{alg:back-pressure}. If the server at the junction does not work during slot $k$ under phase  $\pbf_i(k)$, then for all $c,d$ such that $\mu_{cd}(\pbf_i(k))>0$ we have either $Q_{cd}(k)=0$, or $Q_{d}(k) \geq C_d$ (otherwise, the server would work). In the first case, $d_{cd}(k)=0$, and in the second case, $\Pi_{d}(k)=1$. Since $W_{cd}(k) = d_{cd}(k) \max\left(\Pi_{c}(k)-\Pi_{d}(k),0\right)$ and $\Pi_c(k) \leq 1$, we necessarily have $W_{cd}(k)=0$. As a result, for phase $\pbf_i(k)$, we have $\sum_{cd} W_{cd}(k) \mu_{cd}(\pbf_i(k))=0$.
 
On the other hand, by positivity of flows and weights, we have $\sum_{cd} W_{cd} \mu_{cd} (\tilde p_i) \geq 0$. If $\sum_{cd} W_{cd}(k) \mu_{cd} (\tilde p_i) > 0$, it is absurd because $\tilde p_i$ should have been selected by Algorithm~\ref{alg:back-pressure} instead of $\pbf_i(k)$. If $\sum_{cd} W_{cd}(k) \mu_{cd} (\tilde p_i)=0$, it is also absurd because $\sum_{cd} W_{cd}(k)\mu_{cd} (\tilde p_i) = \sum_{cd} W_{cd}(k) \mu_{cd} (\pbf_i(k))$, and again, $\tilde p_i$ should have been selected by Algorithm~\ref{alg:back-pressure} instead of $\pbf_i(k)$ because an equality holds but contrary to $\pbf_i(k)$, there exists for $\tilde p_i$ an input $\node_a$ and an output $\node_b$ with $\mu_{ab}(\tilde p_i)>0$, $Q_{ab}(k)>0$ and $Q_b(k)< C_b$ (see the second assumption in the theorem).
\end{proof}

\begin{figure}[ht]
\centering
\includegraphics[width=1\linewidth]{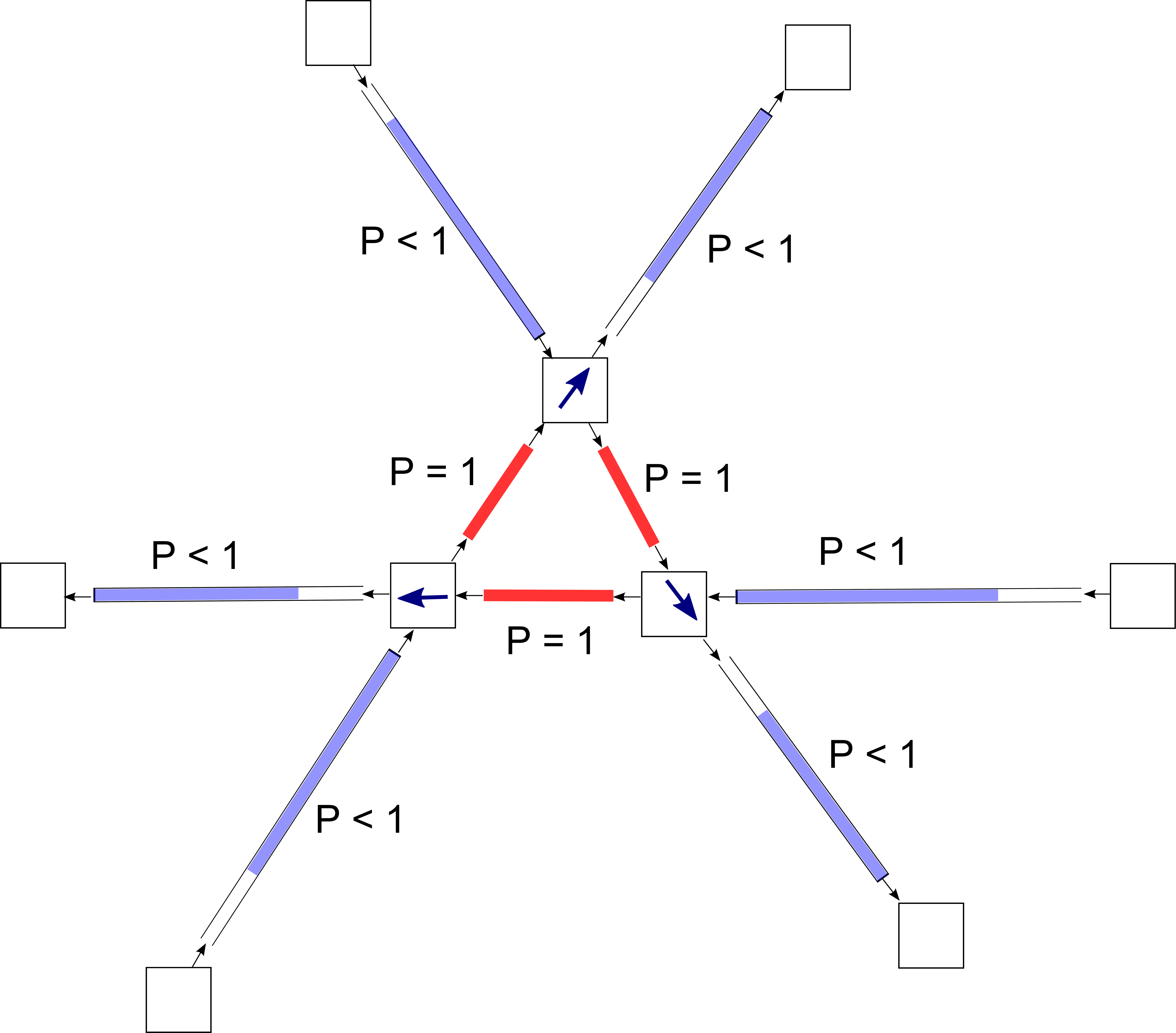}\hfill
\caption{Work-conservation and deadlock resolution using normalized pressure functions. The pressure at full nodes equals 1, while it is strictly lower than 1 at non-full nodes. As a result, the phase computed by Algorithm~\ref{alg:back-pressure} enables to empty full nodes.}
\label{fig:work-conservation}
\end{figure}

\section{Simulation results}
\label{sec:simulations}
In this section, we compare the performance of our proposed capacity-aware back-pressure control with current back-pressure~\cite{Wongpiromsarn2012, Varaiya2009} and a non-optimized fixed cycle traffic light given as reference.
\subsection{Simulation setup}
We have implemented our traffic signal control schemes on the top of the traffic simulator SUMO (Simulation of Urban MObility)~\cite{SUMO2012}. SUMO is a widely recognized open-source traffic simulation package including a traffic simulator as well as supporting tools. The simulator is microscopic, inter- and multi-modal, space-continuous and time-discrete, providing a fair approximation of real world traffic scenarios. 

We adopt a non-uniform network with several types of roads and intersections (Figure~\ref{fig:network}). All roads are bi-directional. Roads V2, V4, V6, V8, H1, H3, H5 and H7 possess only one lane on each direction while the rest of roads have two lanes. Close to the intersection, each road has an additional dedicated left-turn lane. Due to the difference in the number of lanes, there are four types of intersections (Figure~\ref{fig:junction_type}). Each intersection has four phases, as described in Figure~\ref{fig:phases}. The network is open as vehicles may leave and enter the system at all roads.
\begin{figure}[ht]
\centering
\includegraphics[width=1\linewidth]{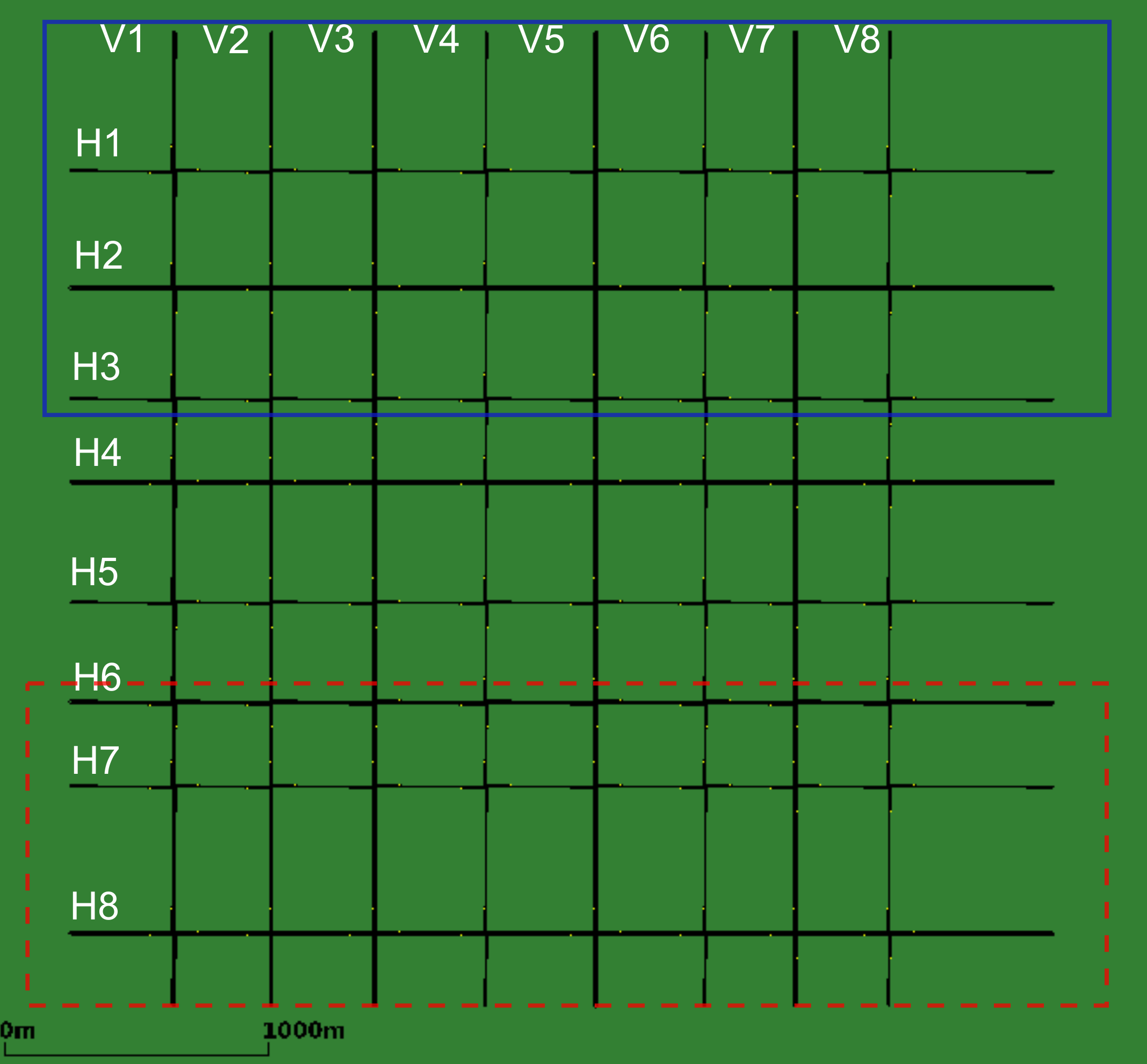}\hfill
\caption{The road network used for simulations.}
\label{fig:network}
\end{figure}

\begin{figure}[ht]
\centering
\includegraphics[width=1\linewidth]{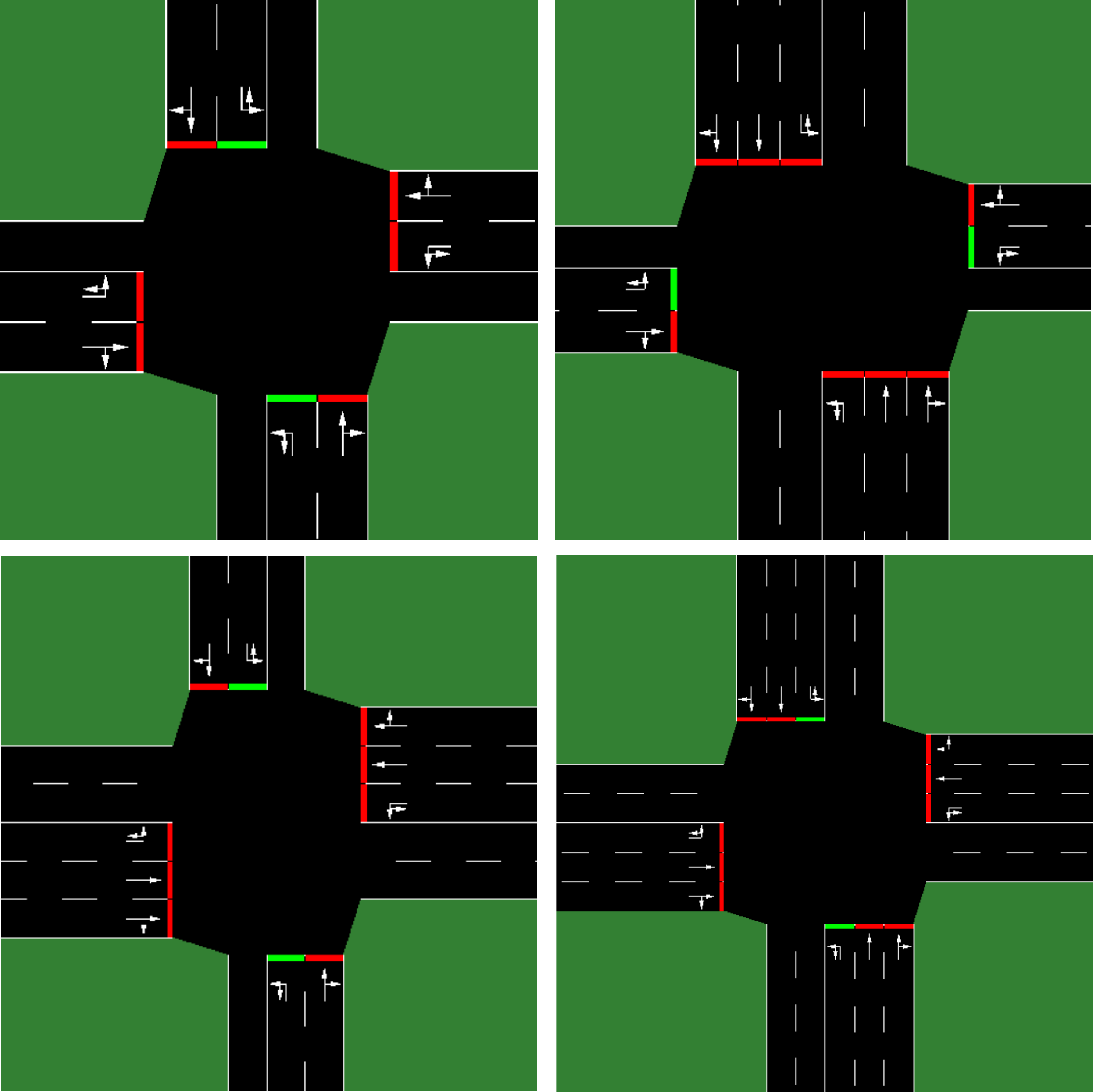}
\caption{Four intersection types depending on the number of lanes of incoming roads.}
\label{fig:junction_type}
\end{figure}

We generate traffic flows using \textit{ActivityGen} available in SUMO supporting tools. \textit{ActivityGen} considers the road network as a city. It takes as inputs, in particular, the city population, the spatial distribution of the population, the working zones, and produces traffic flows with on/off peak patterns. In our network, we set the habitation area to northern city (area in blue rectangle in Figure~\ref{fig:network}) and the working zone resides in the southern city (area in red dashed rectangle in Figure~\ref{fig:network}). We design the test scenarios with a city population ranging from 10000 to 39000 people. We simulate the traffic during a typical 3-hour morning peak (7 am to 9 am). An exemplary histogram of vehicle arrivals in the morning peak is given in Figure~\ref{fig:vehicle_histo}. All Vehicles adopt the default vehicle model of SUMO.
\begin{figure}[ht]
\centering
\includegraphics[width=1\linewidth]{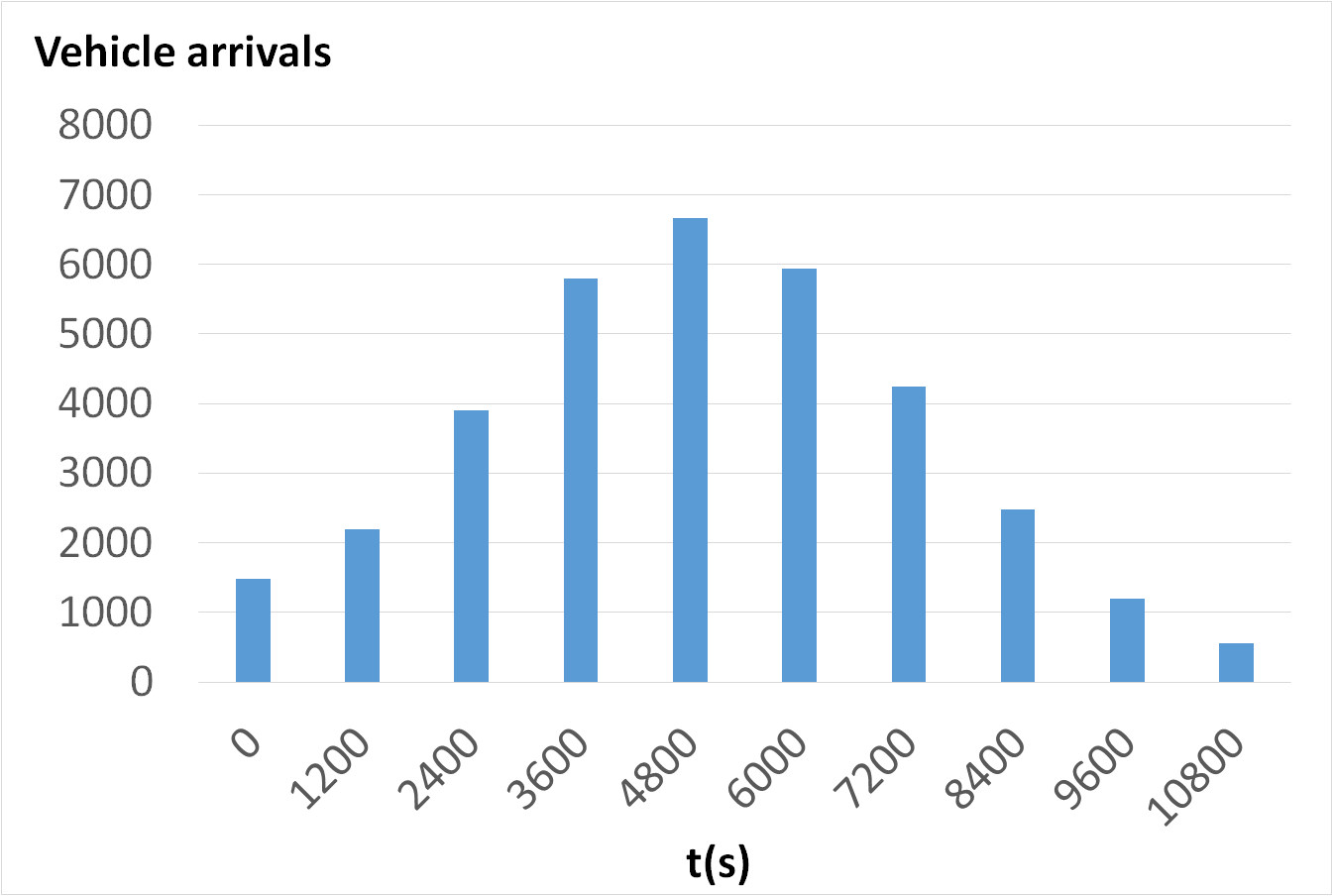}\hfill
\caption{Vehicle arrivals histogram during the morning peak hour (7 am to 9 am) when the city population is set to 33000.}
\label{fig:vehicle_histo}
\end{figure}
A python interface, called TraCI, is provided in SUMO package which enables real-time traffic control. In particular, queues lengths can be retrieved by deploying loop detectors and the phase of traffic lights (more precisely the program) can be updated. We implemented control schemes as a python application. The application interacts with SUMO through TraCI to retrieve queue lengths and to control traffic lights. Three control schemes have been implemented:
\begin{enumerate}
\item Current back-pressure of~\cite{Wongpiromsarn2012, Varaiya2009}: the duration of a time slot is set to 15 seconds, including 4 seconds of yellow phase.
\item Capacity-aware back-pressure: the capacity of roads is computed using the lane length retrieved through TraCI and the vehicle length and minimum gap. The length of the default vehicle in SUMO is 5 meters and the minimum gap is 2.5 meters. As a result, the capacity of a lane is its length in meters divided by 7.5. The time slot setting is the same as in back-pressure: 15 seconds, and the parameters of the convex normalized pressure of Equation~\eqref{eq:pressure-function} are $m=2$ and $C_\infty=200$.
\item Non-optimized fixed cycle control: traffic lights periodically switch the applied phase following the sequence (a)-(c)-(b)-(d) of Figure~\ref{fig:phases}. A complete cycle lasts 60 seconds, in which phase (a) and (b) last 16 seconds and phase (c) and (d) last 6 seconds. The yellow time between two phases is set to 4 seconds. The duration of phases is fixed arbitrarily, it is not optimized.
\end{enumerate}
A complete test run compares the three control schemes under 4 different populations. To ensure comparability, the same random seed is given to \textit{ActivityGen} module for all simulations of a test run.

\subsection{Results and analysis}

Figure~\ref{fig:queue_length} depicts the evolution of the number of vehicles in the network under four population scenarios. We also measure periodically the total time spent by vehicles averaged over vehicles currently in the network, which is a good performance indicator, since users of the network want to minimize their travel time. The evolution of the total time spent is depicted in Figure~\ref{fig:travel_time}. We observe that for a small population of 10000, all control schemes have a similar performance, even if the queue is slightly greater under the non-optimized fixed cycle control scheme. For a population of 27000, back-pressure and capacity-aware back-pressure have similar performance, while they both outperform the fixed-cycle control scheme. As population grows (leading to a growth of flows through the network), for a population of 33000 and 39000, capacity-aware back-pressure outperforms back-pressure control. Therefore, the performance gain of capacity-aware back-pressure mainly occurs at heavy load. Figure~\ref{fig:block} presents a typical configuration that may occur under back-pressure control, while it is alleviated in capacity-aware back-pressure control. Figure~\ref{fig:block}(a) provides a bird eye view on the intersection between H5 and V4. H5 is a 2-lane main street and V4 has only one lane. V4 has reached its full capacity downstream. Figure~\ref{fig:block}(b) offers a closer look on the intersection. Since back-pressure control does not consider the capacity of roads, the left turn phase (phase (c) of Figure~\ref{fig:phases}) of H5 is always activated because the queue difference of corresponding lanes is among the largest. However, it is inefficient and not work-conservative to activate this phase as no vehicle can join the downstream node of V4. Under the same situation, with capacity-aware back-pressure control scheme, the pressure on the downstream node of V4 will be 1, and phase (a) would be activated instead of phase (c) allowing vehicles on H5 going straight to continue their journey (work-conservation is recovered). This qualitative analysis on the example of Figure~\ref{fig:block} is in accordance with quantitative results that show an increase in the average total time spent for back-pressure for a population of 33000 and a drift for a population of 39000 (see Figure~\ref{fig:travel_time}). 

\begin{figure}[ht]
\centering
\includegraphics[width=0.48\linewidth]{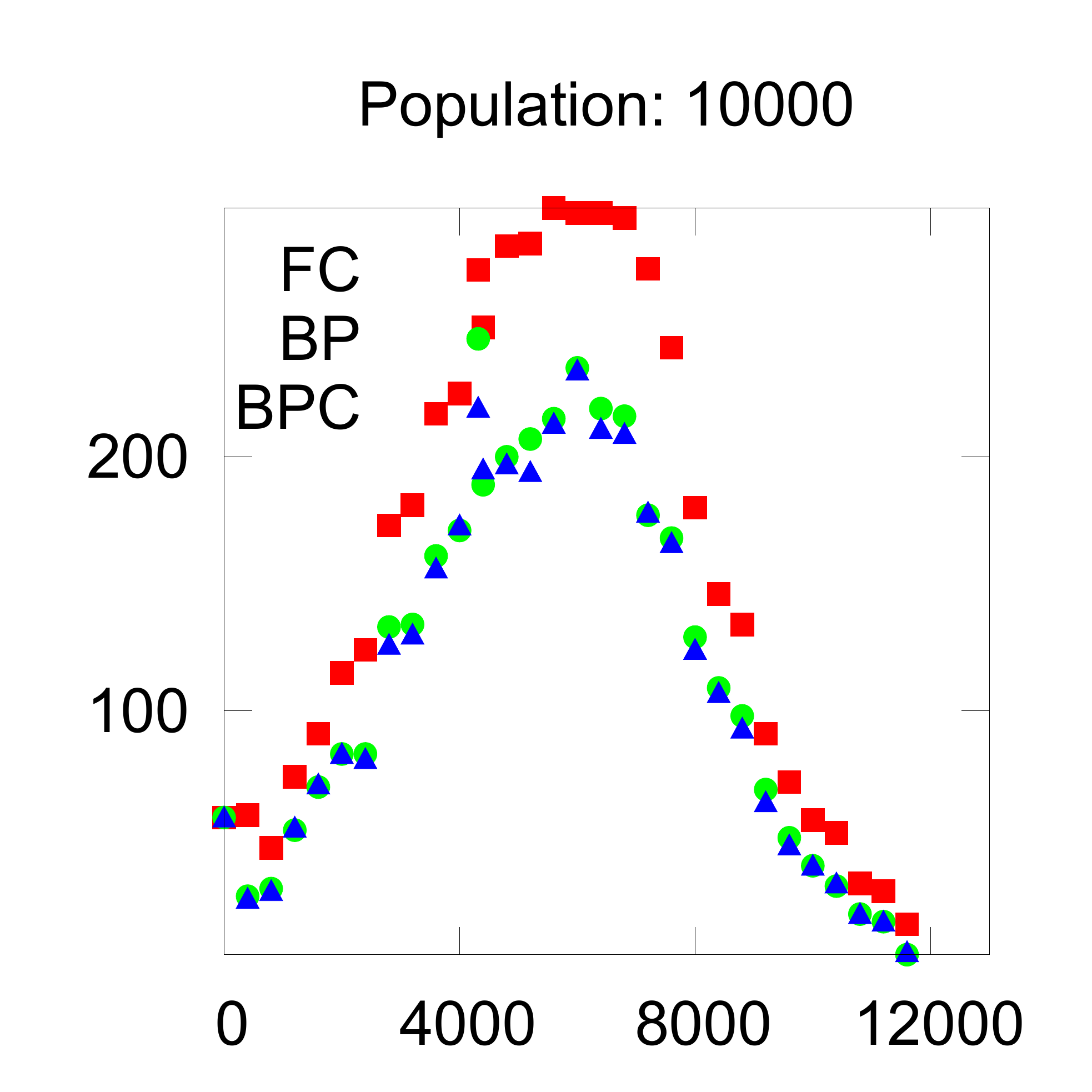}\hfill
\includegraphics[width=0.48\linewidth]{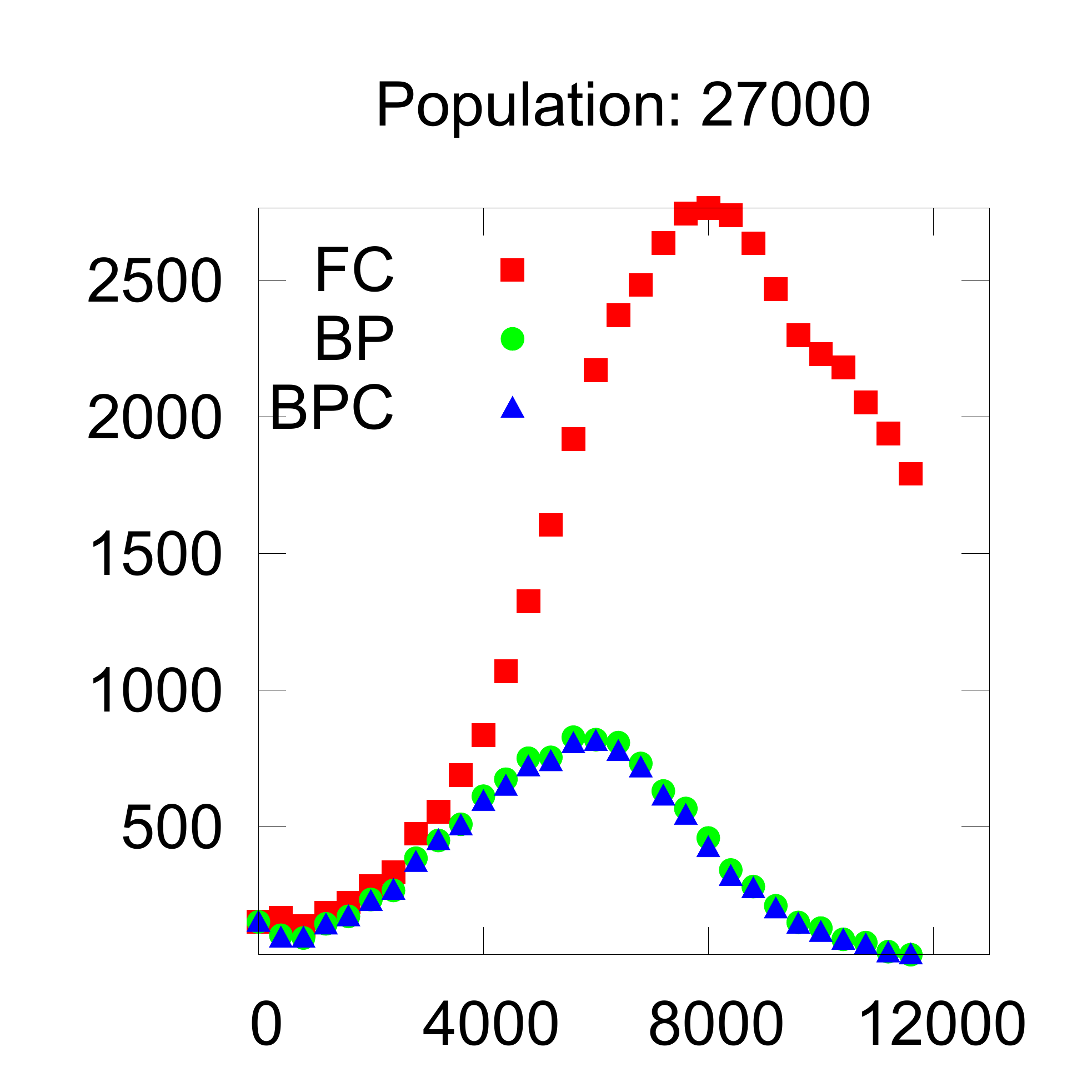}\hfill\\
\vspace{2 mm}
\centering
\includegraphics[width=0.48\linewidth]{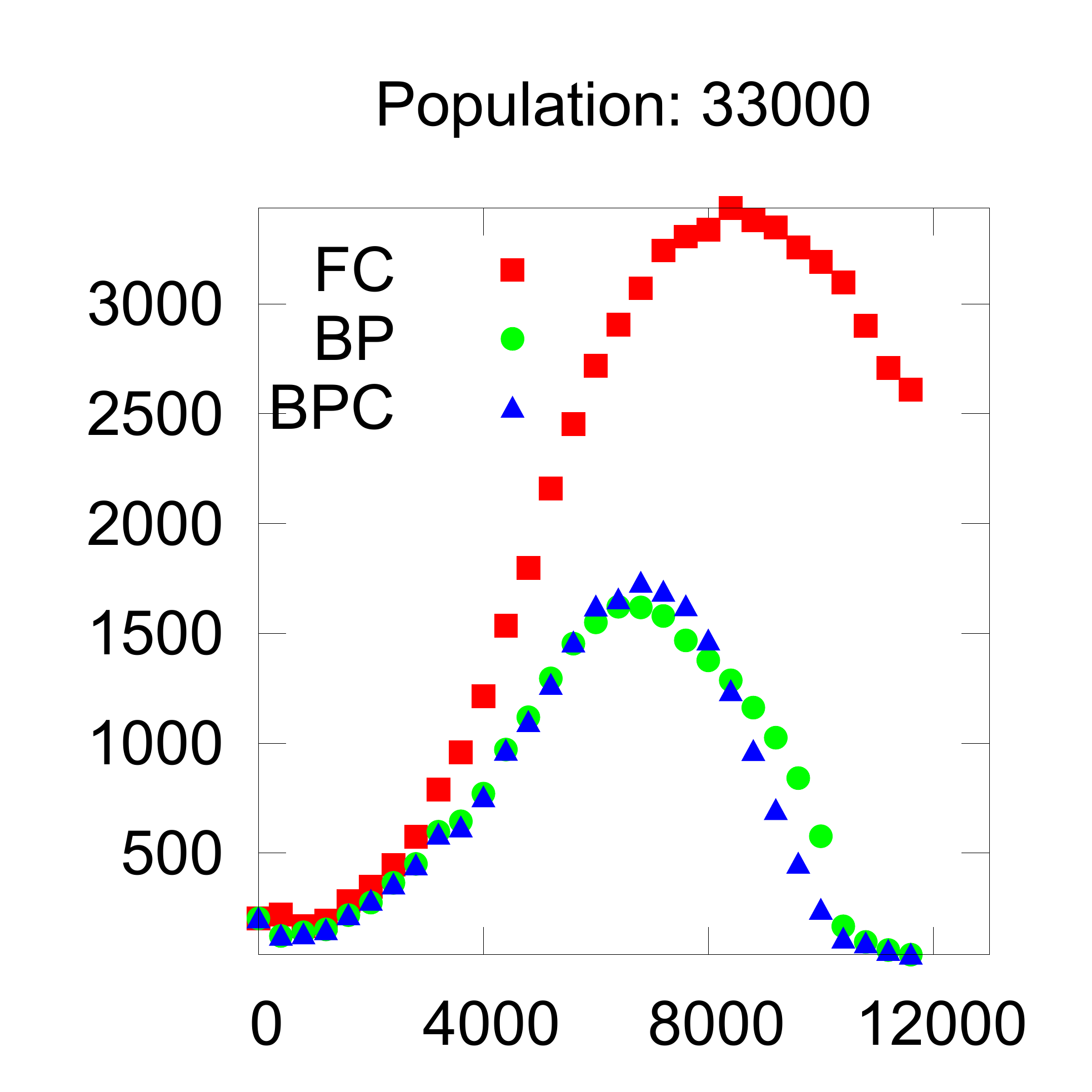}\hfill
\includegraphics[width=0.48\linewidth]{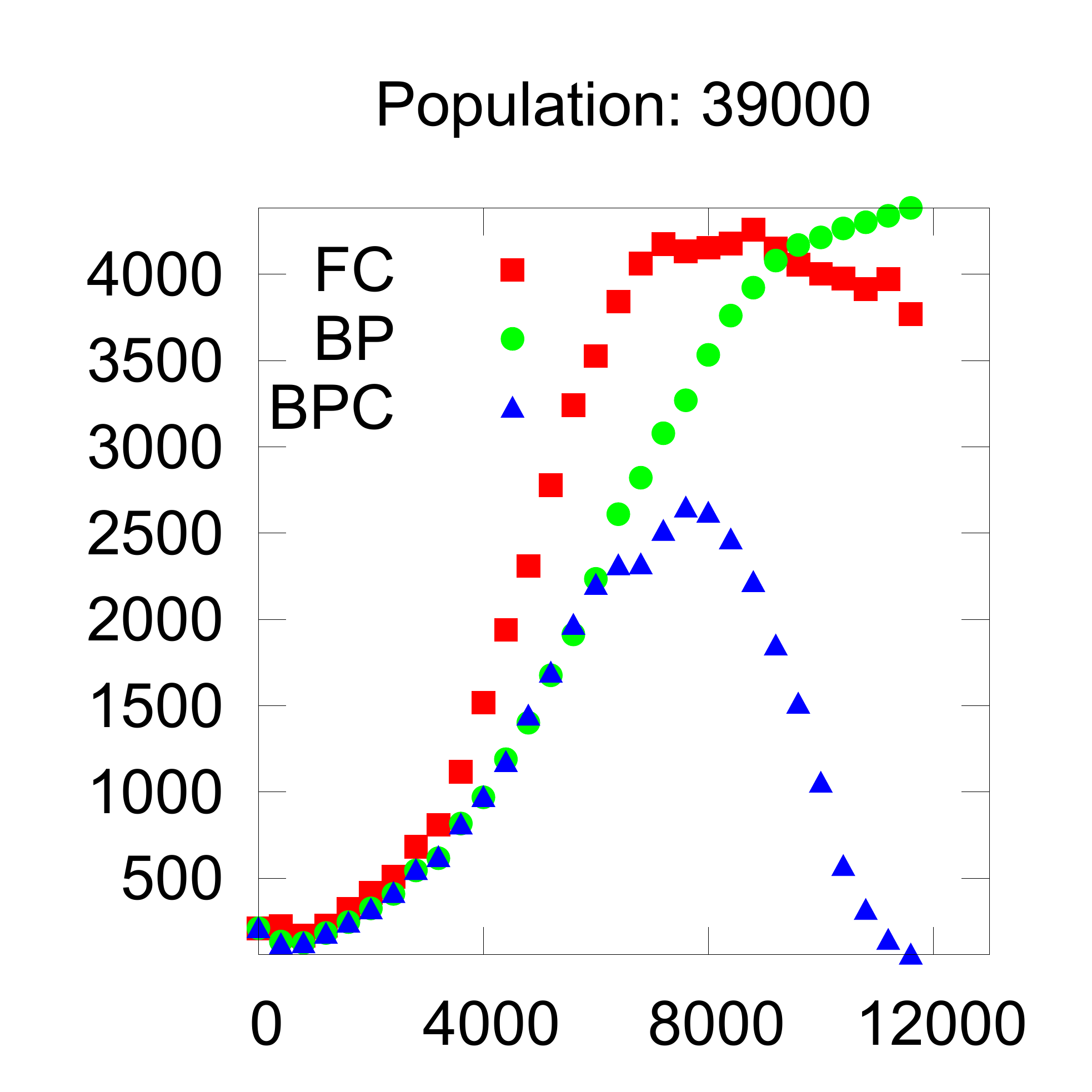}\hfill
\\
\caption{Comparison of the total queue length in the network through time under different population scenarios: 10000, 27000, 33000 and 39000. Time is in seconds. FC refers to the fixed-cycle control scheme, BP to back-pressure and BPC to capacity-aware back-pressure.}
\label{fig:queue_length}
\end{figure}

\begin{figure}[ht]
\centering
\includegraphics[width=0.48\linewidth]{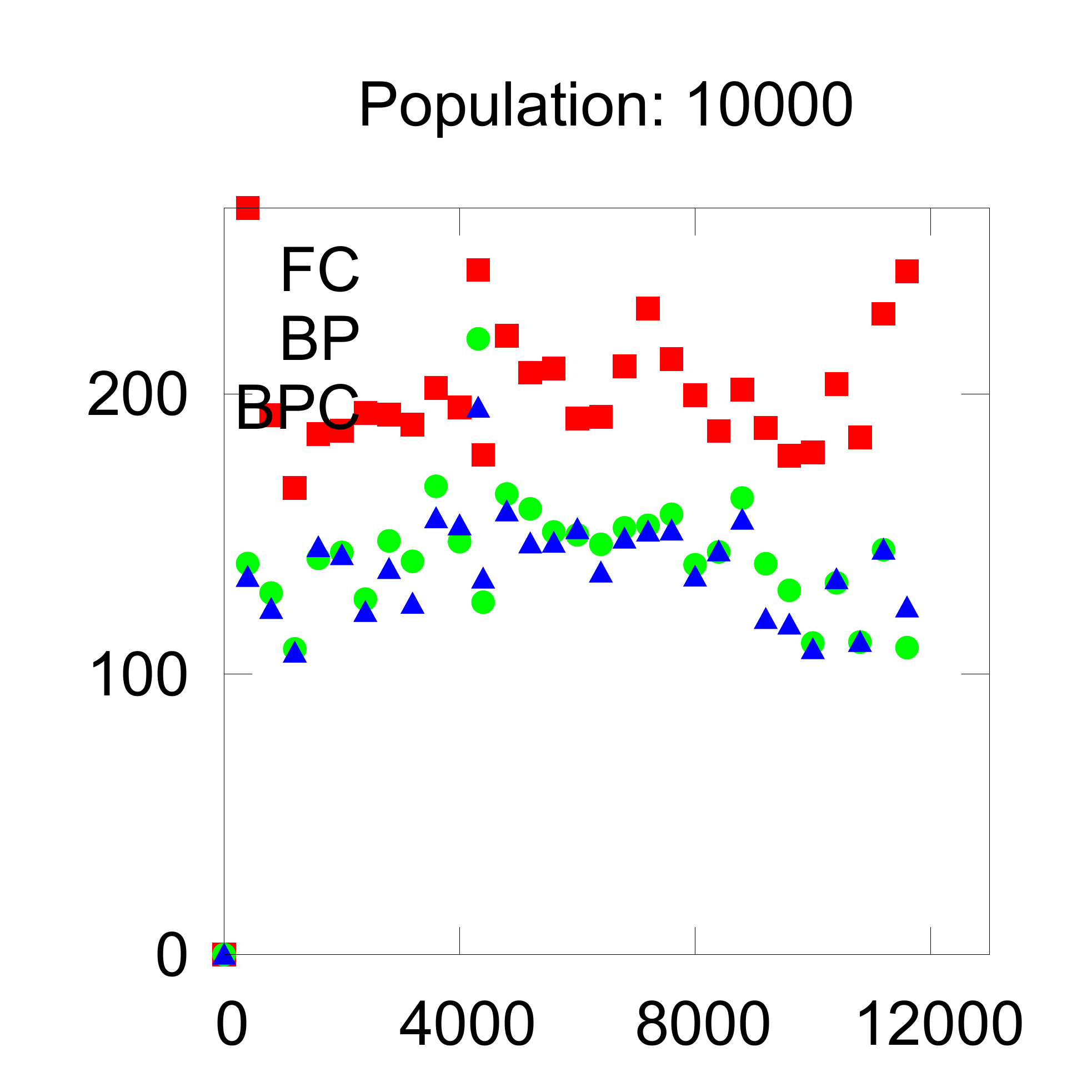}\hfill
\includegraphics[width=0.48\linewidth]{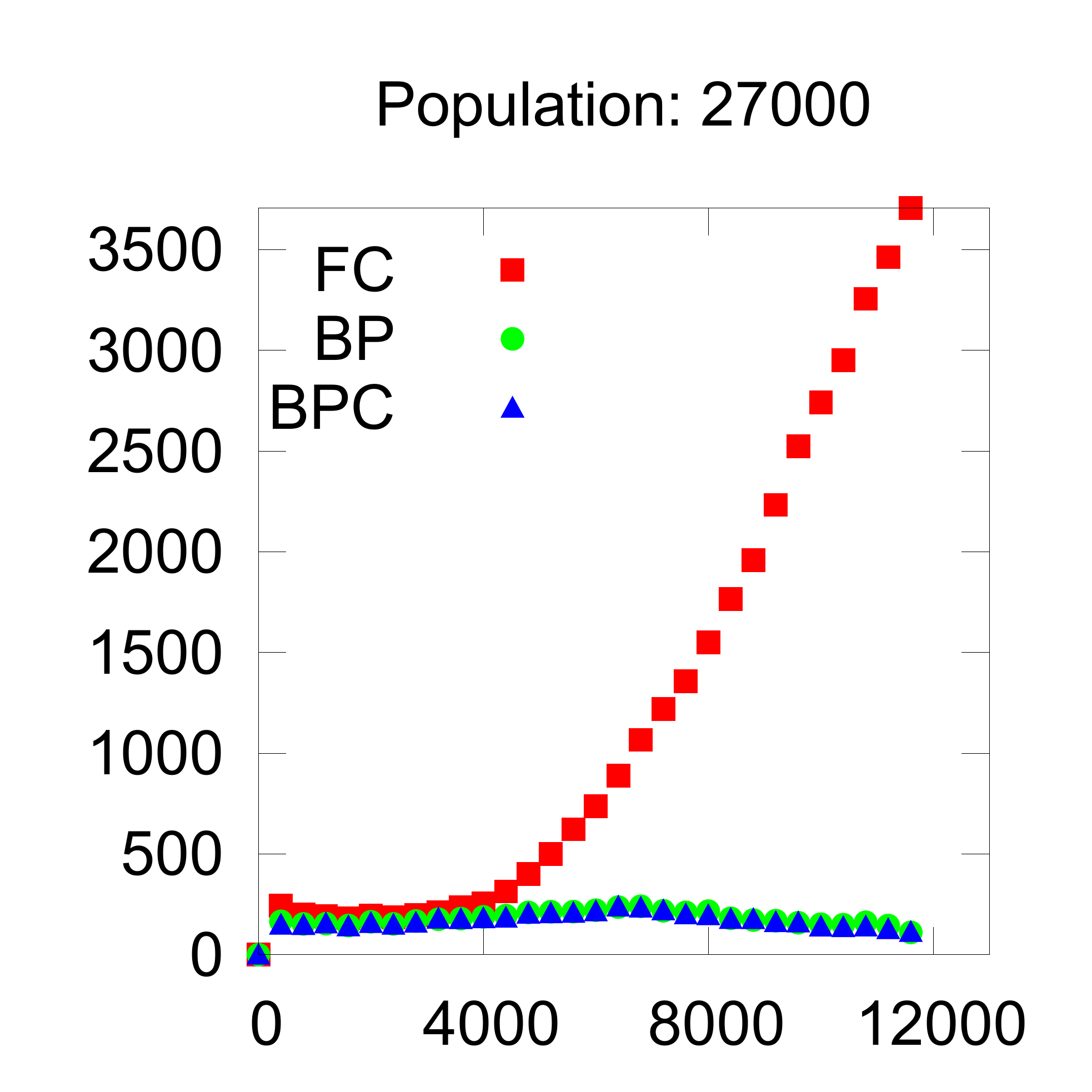}\hfill\\
\vspace{2 mm}
\centering
\includegraphics[width=0.48\linewidth]{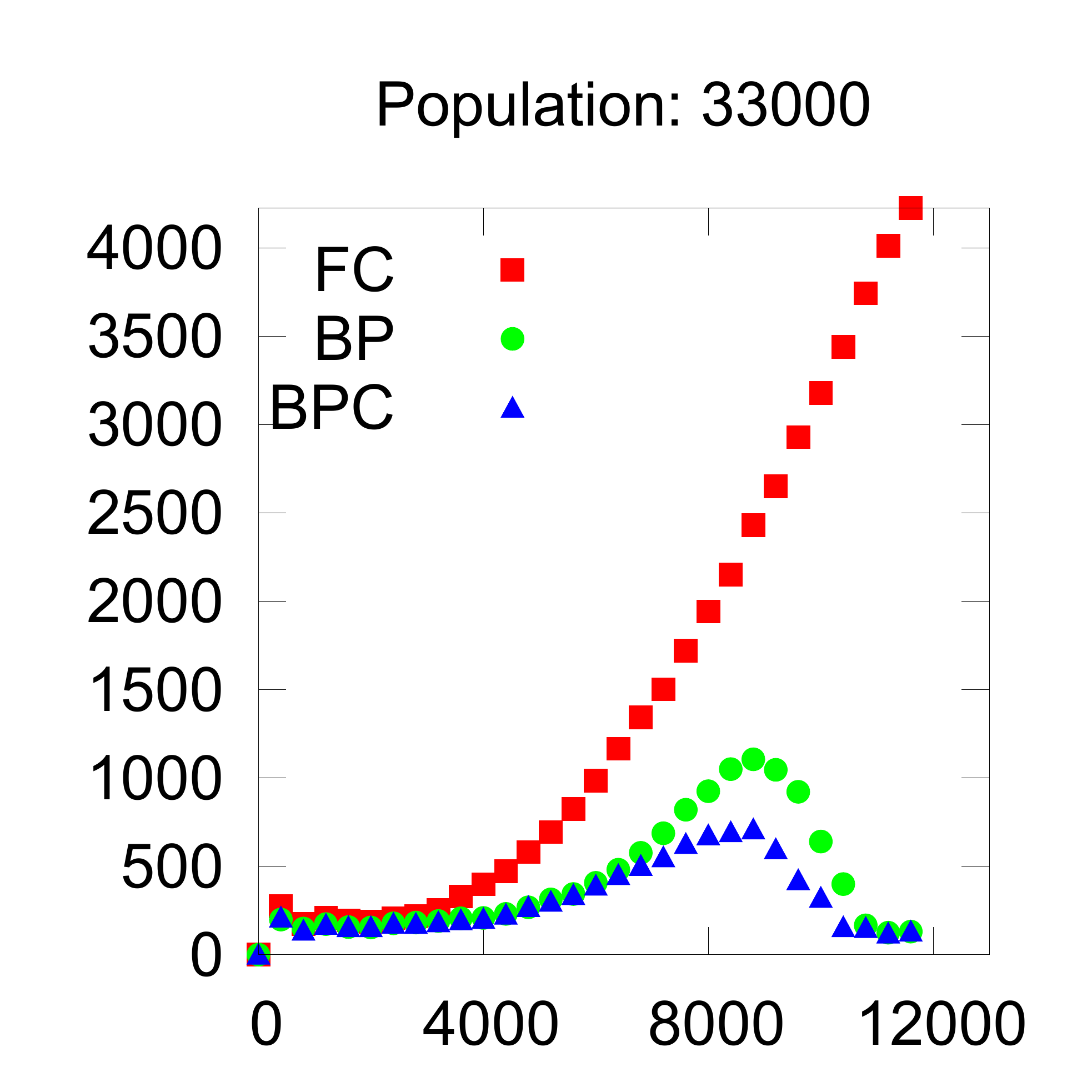}\hfill
\includegraphics[width=0.48\linewidth]{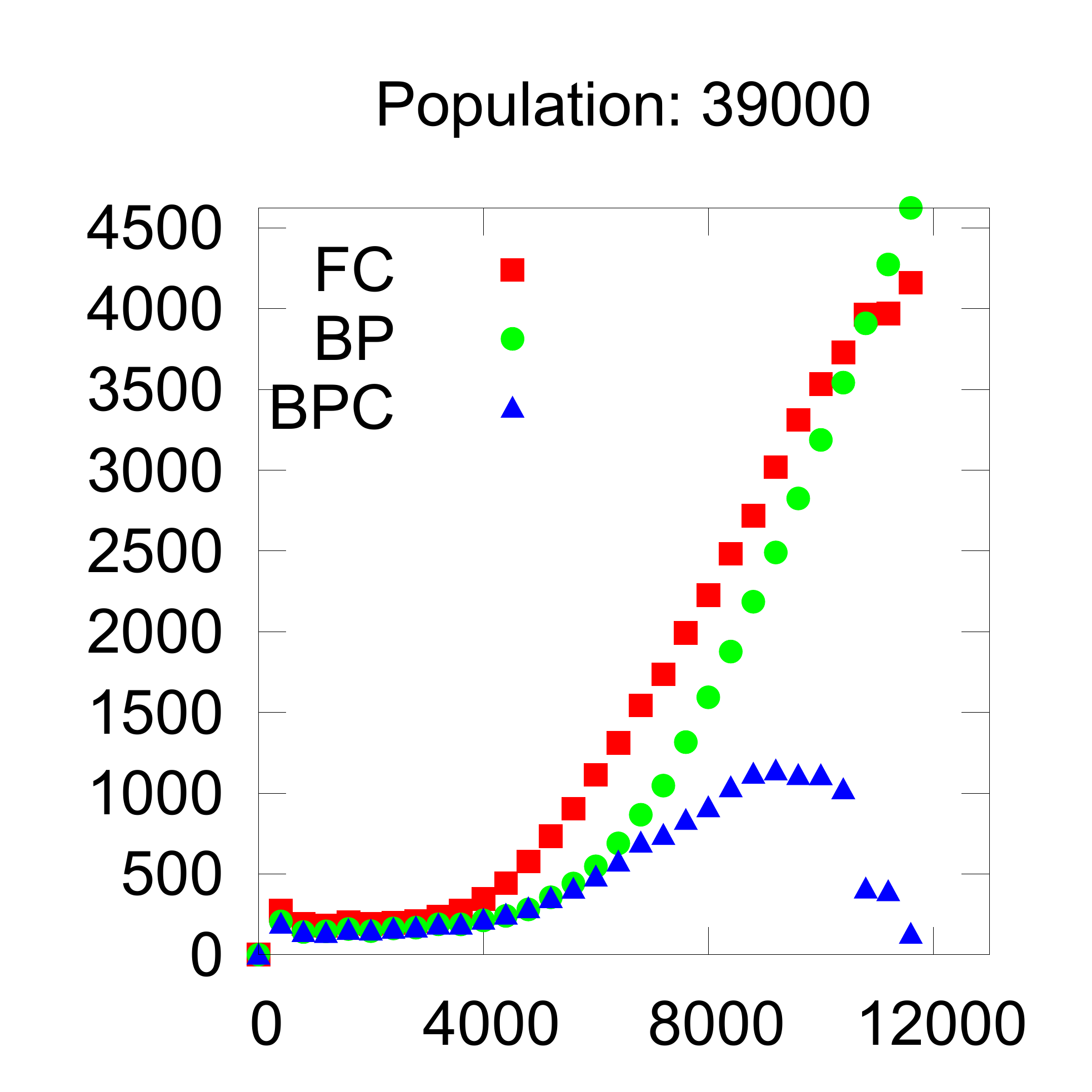}\hfill
\\
\caption{Comparison of the average total time spent in the network through time under different population scenarios: 10000, 27000, 33000 and 39000. Time is in seconds. FC refers to the fixed-cycle control scheme, BP to back-pressure and BPC to capacity-aware back-pressure.}
\label{fig:travel_time}
\end{figure}

\begin{figure}[ht]
\includegraphics[width=1\linewidth]{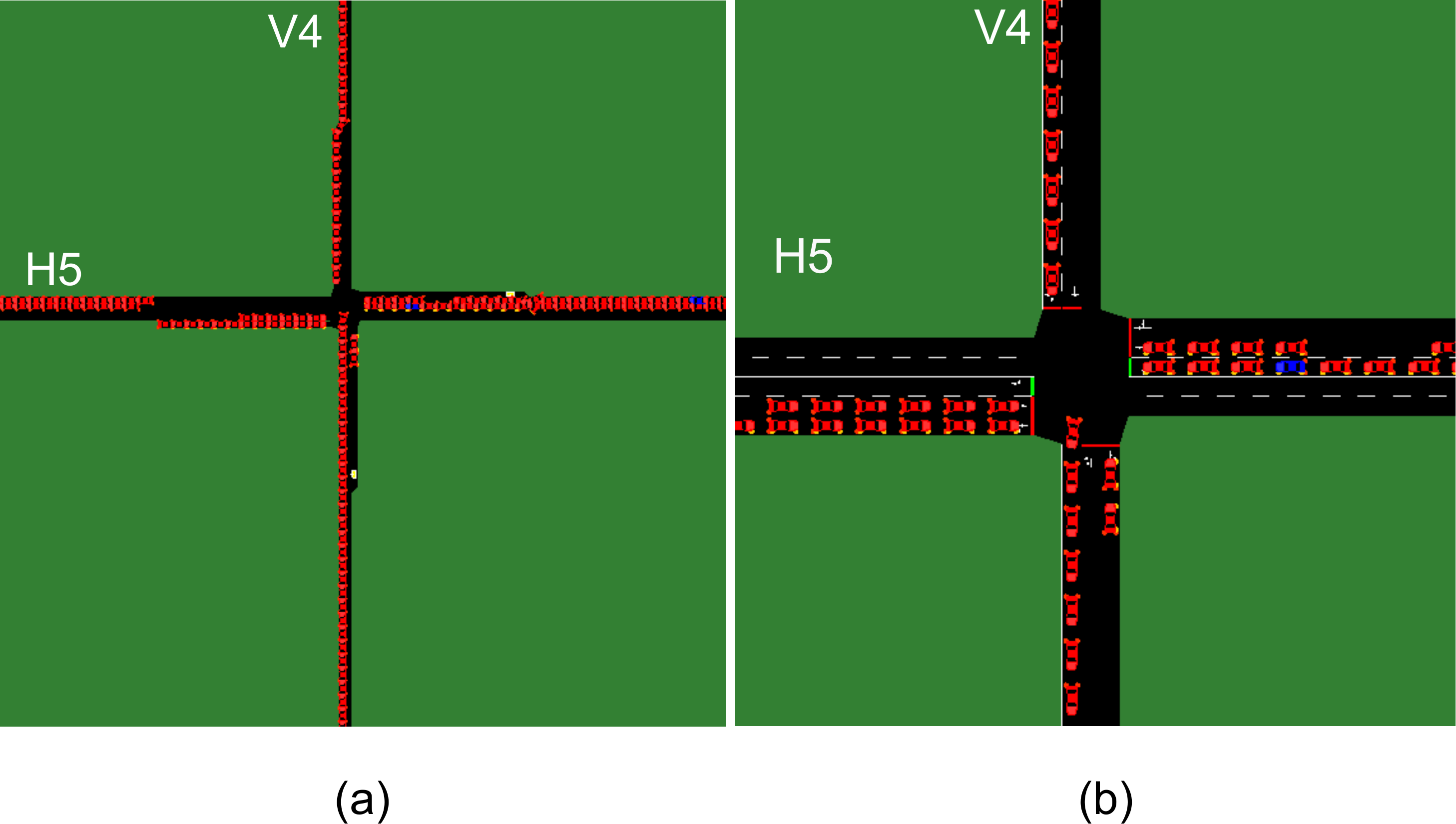}
\caption{A typical inefficient configuration in back-pressure control when capacities are not considered. The screen shot is captured at time 7135 with a population of 39000}\label{fig:block}
\end{figure}

\section{Conclusions and perspectives}
\label{sec:conclusions}

In this paper, we adapt current back-pressure control to take into account bounded queues constraints. The lack of work-conservation of current back-pressure control is proved, and identified as a source of congestion propagation through the network. This phenomenon is caused by pressure saturation at queues that have reached maximum capacity. 

Normalized pressure functions are proved to ensure work-conservation and this property tends to indicate that congestion propagation will be mitigated. Simulations confirm the efficiency of the approach. It is remarkable that performance have been increased under bounded queues constraints as indicated by simulations, while the ability to distribute the control over junctions and $\mathcal{O}(1)$ complexity properties have been conserved. 

However, for very high arrival rates, and in particular above the capacity region, congestion will necessarily eventually propagate through the network. In this case, vast areas of the network will be congested and inter-junctions interactions due to blocking tend to indicate that phase control should be carried out on groups of junctions belonging to the same congested region. Nevertheless, this task is of high complexity due to the exponential complexity of inter-junctions interactions.

Finally, future works on back-pressure signal control should consider the feedback loop between traffic signal control and driver behaviour, and in particular driver routing choice. One can expect drivers at a junction to change their routing choice if the traffic light gives the right-of-way in favour of some particular output nodes due to traffic conditions. It is of high interest to take into account such behaviours, since they may stabilize or unstabilize the queuing network.

\bibliographystyle{ieeetr}
\bibliography{biblio}

\end{document}